\newcommand{\cmark}{\ding{51}} % ✔
\newcommand{\xmark}{\ding{55}} % ✘
\newtheorem{definition}{Definition}
\newtheorem{lemma}{Lemma}
\newtheorem{limitation}{Limitation}
\newtheorem{corollary}{Corollary}
\newtheorem{example}{Example}
\newtheorem{ruledef}{Rule}
\definecolor{darkolivegreen}{rgb}{0.13, 0.55, 0.13}
\newcommand{\name}[1]{LEFT-RS}
\begin{document}

%% ARTIFACT EVALUATION STAMP - START
% \makeatletter
% \def\@artifactstamptop{0.1cm}
% \def\@artifactstampright{0.1cm}
% \def\@artifactstampwidth{2.5cm}
% \let\@aetitle\@maketitle
% \def\@maketitle{
% \begin{tikzpicture}[remember picture,overlay,shift={(current page.north
% east)}]%
% \node[anchor=north
% east,xshift=-\@artifactstampright,yshift=-\@artifactstamptop]{\includegraphics[width=\@artifactstampwidth]{badge_rtas_AE}};%
% \end{tikzpicture}%
% \@aetitle}
% \makeatother
%% ARTIFACT EVALUATION STAMP - END

%\title{Managing Shared Resources in Mixed-Criticality Multiprocessor Systems with Fault Tolerant}

%\title{\name: Reliable Resource Sharing in Multiprocessor Mixed-Criticality Systems}
\bstctlcite{IEEEexample:BSTcontrol}

\title{LEFT-RS: A Lock-Free Fault-Tolerant Resource Sharing Protocol for Multicore Real-Time Systems\thanks{Corresponding author: Shuai Zhao, zhaosh56@mail.sysu.edu.cn}}

% \title{Fault-Tolerant Resource Sharing for Multicore \textcolor{blue}{Real-Time} Systems: A Lock-Free Approach}

 \author{
 
 \IEEEauthorblockN{Nan Chen\IEEEauthorrefmark{1}, Xiaotian Dai\IEEEauthorrefmark{1}, Tong Cheng\IEEEauthorrefmark{2}, Alan Burns\IEEEauthorrefmark{1}, Iain Bate\IEEEauthorrefmark{1}, Shuai Zhao\IEEEauthorrefmark{2}}
 
  \IEEEauthorblockA{
 \IEEEauthorrefmark{1}University of York, UK
 \IEEEauthorrefmark{2}Sun Yat-sen University, China
}
 }

\maketitle

% \thispagestyle{plain}
% \pagestyle{plain}    
    
% Abstract + Introduction -  ed work - 1 page
% Execution Model - 1.5 page
% Analysis - 1 page
% Evaluation + Conclusion + Reference 1.5 page

\begin{abstract}
Emerging real-time applications have driven the transition to multicore embedded systems, where tasks must share resources due to functional demands and limited availability. These resources, whether local or global, are protected within critical sections to prevent race conditions, with locking protocols ensuring both exclusive access and timing requirements. However, transient faults occurring within critical sections can disrupt execution and propagate errors across multiple tasks. Conventional locking protocols fail to address such faults, and integrating traditional fault tolerance techniques often increases blocking. Recent approaches improve fault recovery through parallel replica execution; however, challenges remain due to sequential accessing, coordination overhead, and susceptibility to common-mode faults. 
In this paper, we propose a Lock-frEe Fault-Tolerant Resource Sharing (LEFT-RS) protocol for multicore real-time systems. LEFT-RS allows tasks to concurrently access and read global resources while entering their critical sections in parallel.
Each task can complete its access earlier upon successful execution if other tasks experience faults, thereby improving the efficiency of resource usage. Our design also limits the overhead and enhances fault resilience.
We present a comprehensive worst-case response time analysis to ensure timing guarantees. Extensive evaluation results demonstrate that our method significantly outperforms existing approaches, achieving up to an $84.5\%$ improvement in schedulability on average.
\end{abstract}

%\begin{IEEEkeywords}
%\end{IEEEkeywords}

\section{Introduction}
\label{sec:intro}

% Driven by the increasing demand for high-performance computing, real-time embedded systems must operate reliably under stringent timing constraints. However, as feature sizes continue to shrink and integration density increases, these systems become more susceptible to faults, particularly transient faults, which occur frequently due to reduced voltage margins, increased process variations, and heightened electrical interference \cite{vijaykumar2002transient}. To ensure system reliability, fault tolerance mechanisms are employed \cite{osinski2017survey} 
% \shuainote{a citation at here is quite strange}
% , for example through time redundancy approaches such as re-execution, and space redundancy approaches such as replication. 

The growing demands of emerging real-time applications have driven the shift from single-core to multicore embedded systems, where tasks frequently access shared resources due to functional needs and inherent resource constraints. These shared resources include both local and global resources. Local shared resources reside within a single core but may be accessed by multiple threads on that core, such as per-core queues or synchronisation flags, whereas global shared resources span multiple cores and include memory-mapped I/O devices and system-wide data structures. To avoid race conditions, such resources are typically protected within critical sections, which are code segments that require mutual exclusion. Resource-locking protocols are designed to guarantee mutually exclusive access for tasks while maintaining the timing requirements of the systems~\cite{brandenburg2022multiprocessor}. 

% Embedded systems in critical domains must continue to function correctly in the presence of faults. Among these, transient faults are particularly important due to their brief duration without permanent damage \cite{yari2022passive}. Such faults may arise from various sources, including environmental disturbances such as electromagnetic interference or cosmic radiation \cite{safari2022survey}. Their short-lived nature and recoverability make them well-suited for handling through redundancy mechanisms \cite{kannaujiya2024radiation}

Embedded systems in critical domains must continue to function correctly in the presence of faults.
Transient faults, which are temporary errors that occur during execution due to hardware fluctuations, timing anomalies, or other short-lived conditions, are of particular concern \cite{vijaykumar2002transient}. They are often tolerated by temporal redundancy (e.g., re-execution or checkpointing, where checkpointing enables small-scope re-execution), or spatial redundancy (e.g., replication) \cite{osinski2017survey}.
Transient faults during critical sections must be handled to prevent error propagation across tasks. 
Conventional resource-locking protocols do not explicitly address fault tolerance \cite{brandenburg2022multiprocessor}.
Directly integrating checkpointing into resource-sharing protocols can significantly exacerbate global resource contention \cite{chen2022msrp}, primarily because fault recovery may substantially prolong the execution of critical sections, leading to locks being held for extended periods and thus increasing blocking times for other tasks in the system.

To address this challenge, MSRP-FT \cite{chen2022msrp} combines checkpointing and replication, allowing spinning tasks to execute replicas in parallel and thereby reducing time lost to transient faults during critical-section execution. However, resource requests are still served sequentially in FIFO order, which can cause prolonged blocking under frequent faults. Moreover, coordinating replica execution introduces additional overhead, and the reliance on identical replicas makes the system vulnerable to correlated or timing-sensitive faults. These limitations are examined in detail in Section~\ref{sec:limitation}.

In this paper, we propose a Lock-frEe Fault-Tolerant Resource Sharing protocol for multicore real-time systems, referred to as LEFT-RS. 
We record resource request orders in a FIFO queue. 
Instead of managing resource access sequentially as in MSRP-FT, LEFT-RS adopts a lock-free design that enables tasks to concurrently read global resources and initiate local execution. Tasks re-execute their critical sections either when they incur faults or when the shared resource has been updated, to ensure fault-free execution and avoid race conditions.  
Each task can complete its access upon successful execution when preceding tasks in the FIFO queue encounter faults. 
LEFT-RS facilitates resource access efficiency and ensures that remote blocking (blocking caused by tasks on remote cores due to resource contention) does not worsen as fault occurrences increase. Moreover, its design avoids the heavy coordination overhead and enhances fault resilience. To guarantee timing predictability, we provide a comprehensive worst-case response time analysis. Extensive experimental evaluations demonstrate that our method outperforms the state-of-the-art by up to $84.5\%$ on average in terms of schedulability.

The following sections are structured as follows: Section~\ref{sec:model} presents the system model. Section~\ref{sec:related} reviews related work and provides an in-depth analysis of the limitations of state-of-the-art approaches. Section~\ref{sec:protocol} introduces the proposed protocol, while Section~\ref{sec:performance} offers a theoretical comparison with the state-of-the-art approach.
Section~\ref{sec:analysis} presents the worst-case response time analysis.
Section~\ref{sec:evaluation} shows the evaluation results 
and Section~\ref{sec:conclusion} concludes the paper.

\section{System Model} 
\label{sec:model}

We consider systems composed of a set of identical cores, denoted as $\Lambda$, and a set of sporadic tasks, $\Gamma$, scheduled using the FP-FPS (Fully-Partitioned Fixed-Priority Scheduling) scheme. Each core in $\Lambda$ is represented as $\lambda_k$, where $k$ identifies the index of the core.
Each task $\tau_i$ (the $i$\textsuperscript{th} task in $\Gamma$) is characterized by the tuple $\tau_i = \{C_i, T_i, D_i, P_i\}$. Here, $C_i$ represents the pure Worst-Case Execution Time (WCET) without accessing shared resources, $T_i$ denotes the period (or minimum inter-arrival time), $D_i$ is the constrained deadline satisfying $D_i \leq T_i$, and $P_i$ indicates the priority of the task. Each task is assigned a unique priority, with higher values of $P_i$ corresponding to higher priorities. For a given task $\tau_i$, we denote tasks on the same core with higher and lower priorities as $\tau_h$ and $\tau_l$, respectively. Tasks assigned to other cores are denoted by $\tau_j$.

The system also includes a set of shared resources, denoted as $\mathcal{R}$, where the $x$\textsuperscript{th} shared resource is represented as $r^x$. 
Each resource $r^x$ is characterised by two parameters: $c^x$ and $N^x_i$. The parameter $c^x$ represents the computation length associated with $r^x$, while $N^x_i$ indicates the number of requests made by task $\tau_i$ to  $r^x$ during a single release.  
This work does not consider nested resource sharing, meaning that a task can hold only one resource at a time. However, group locks~\cite{zhao2017new} can be directly supported in the system.

% \shuainote{shall we extend text a bit to describe transient faults?}
We address transient faults in this study, which can be mitigated through redundancy techniques such as checkpointing or replication. Each transient fault is assumed to affect only one task at a time. Fault detection is performed at designated checkpoints using an acceptance test, such as result validation or consistency checking, to verify the correctness of task outputs \cite{izosimov2005design}.
The frequency of occurrence of transient faults can be modelled through various ways~\cite{miskov2008process}, which is out of the scope of this paper. Similar to~\cite {chen2022msrp}, we assume each task can incur a maximum of $f_i$ faults during any single release. 
The number of faults a task can incur during its access to $r^x$ is denoted as  $f^x_i \leq f_i$. 
We use $n^x_i$ to denote the total execution number of a request of $\tau_i$ to $r^x$, which contains $f^x_i$ failures and one successful execution ($n^x_i=f^x_i+1$). 
All the notations are summarised in Table \ref{tab:notation}.

% The total execution number of $\tau_i's$ request is denoted as $n^x_i$, which includes $n^x_i-1$ fault-induced re-executions and one successful execution. For remote requests, we use $n^x_j$.

\begin{table}[t]
    \centering
    \renewcommand{\arraystretch}{1.1}
    \setlength{\tabcolsep}{8pt}
    \caption{List of Notations Used in the System Model}
    \begin{tabular}{l  p{5.8cm}}
        \hline
        \textbf{Notations} & \textbf{Descriptions} \\
        \hline
        $\Lambda$ & Set of identical cores in the system \\
        $\lambda_k$ & Core with index $k$, $\lambda_k \in \Lambda$ \\
        \hline
        $\Gamma$ & Set of sporadic tasks \\
        $\tau_i$ & Task with index $i$, $\tau_i \in \Gamma$ \\
        $\tau_h$, $\tau_l$, $\tau_j$ & Tasks related to $\tau_i$: higher/lower priority on the same core, or on a different core \\
        \hline
        $C_i$ & Pure worst-case execution time (WCET) of   $\tau_i$ \\
        $T_i$, $D_i$ & Period (minimum inter-arrival time) and constrained deadline ($D_i \leq T_i$) of   $\tau_i$  \\
        $P_i$ & Priority of $\tau_i$ (a larger $P_i$ means a higher priority) \\
        \hline

        $\mathcal{R}$ & Set of shared resources in the system\\
        $r^x$ & Shared resource with index $x$ \\
        $c^x$ & Computation length of $r^x$ \\
        $N^x_i$ & Number of requests for $r^x$ by $\tau_i$ per release \\
        $f_i$, $f^x_i$, $n^x_i$ & Number of faults incurred by $\tau_i$ per release. Fault and total execution number during access to $r^x$ \\
        \hline
    \end{tabular}
    \label{tab:notation}
      \vspace{-1em}
\end{table}

% Symbols stated and their corresponding definitions are included in Table \ref{table:1}.

% \textcolor{blue}{Remember to explain that the system mode applies to all execution sections (in terms of execution time) in a task.}

% The system is executed based on the Adaptive Mixed Criticality (AMC) scheme~\cite{baruah2011response}. It starts with the lowest system mode $A$, in which all tasks are allowed to execute up to $C_{i,A}$. If a task overruns its budget, the system upgrades to the next mode (i.e., a mode switch), and tasks with $l_i<B$ are suspended while others can execute with a higher budget $C_{i,B}$. We assume a mode switch always advances the system to a higher mode, and suspended tasks do not re-activate. Mode changes in the reverse order will not be addressed due to space constraints. 

\section{Related Work and Background} 
\label{sec:related}

In this section, we first introduce the literature on resource sharing protocols and then review the state-of-the-art in fault-aware resource sharing.

\subsection{Resource sharing protocols}\label{sec:related_resource}

% Embedded systems in critical domains must continue to function correctly in the presence of faults. Among these, transient faults are particularly important due to their brief duration without permanent damage \cite{yari2022passive}. Such faults may arise from various sources, including environmental disturbances such as electromagnetic interference or cosmic radiation \cite{safari2022survey}. Their short-lived nature and recoverability make them well-suited for handling through redundancy mechanisms \cite{kannaujiya2024radiation}.

% Redundancy has been extensively explored to tolerate transient faults. At the hardware level, techniques such as Triple Modular Redundancy and Dual Modular Redundancy replicate components and apply majority voting to mask faults \cite{aketi2018single}. In terms of timing, methods like re-execution and checkpointing enable fault recovery by repeating computations \cite{gopalakrishnan2016remo}. For data integrity, techniques such as Error-Correcting Codes and Cyclic Redundancy Checks embed encoded information for error detection and correction \cite{fernandez2021performance,zhang2021low}. From a software engineering perspective, approaches like N-Version Programming and Recovery Blocks rely on independently developed implementations to enhance reliability \cite{chitsaz2012non}. Each method involves trade-offs in terms of cost, performance, and fault coverage.

Lock-based protocols have been widely studied for efficient resource sharing in multicore systems. The Multiprocessor Priority Ceiling Protocol (MPCP) extends traditional priority ceilings to reduce remote blocking \cite{gai2003comparison}. Multiprocessor Stack Resource Protocol (MSRP) uses non-preemptive spinning to avoid preemption delay \cite{gai2003comparison}. In contrast, the Multiprocessor Resource Sharing Protocol (MrsP) employs a global priority ceiling to bound blocking, introduces a helping mechanism, and allows cross-core migration \cite{burns2013schedulability}. The Flexible Multiprocessor Locking Protocol (FMLP) distinguishes between short and long critical sections, combining spinning and suspension to balance blocking and concurrency \cite{block2007flexible}. The ${O}(m)$ Locking Protocol (OMLP) further refines priority-based mechanisms for improved scalability \cite{brandenburg2013omlp}, and the Flexible Resource Accessing Protocol (FRAP) introduces dynamic spinning priorities and advanced blocking analysis for better performance~\cite{zhao2024frap}. Each protocol targets specific challenges, and their effectiveness depends on system architecture and application requirements.

In contrast, lock-free approaches \cite{anderson1997real} rely on atomic operations, such as Compare-and-Swap (CAS) or Load-Link/Store-Conditional (LL/SC), to update shared resources without using explicit locks. Instead of enforcing mutual exclusion through serialized access, these algorithms detect conflicts by monitoring changes to shared data and retry operations when necessary. This non-blocking design ensures system-wide progress while preventing race conditions and data corruption.

\subsection{Fault-aware resource sharing}\label{sec:MSRP-FT}

% \textcolor{red}{Benefits of setting checkpoints, avoid transitive errors and unnecessary resource contention}

The work in~\cite{nabavi2023fault} addresses transient faults in critical sections but adopts simple checkpointing, which tolerates faults through repeated sequential re-execution of resource requests.
Building upon the traditional MSRP~\cite{gai2001minimizing}, MSRP-FT \cite{chen2022msrp} efficiently tolerates transient faults within the critical sections of tasks in Mixed-Criticality Systems (MCS). Since the fault tolerance approach of MSRP-FT is independent of MCS, we will discuss this approach within a standard system setup to ease the presentation. 

\begin{figure}[h]
\centering
\includegraphics[width=0.9\columnwidth]{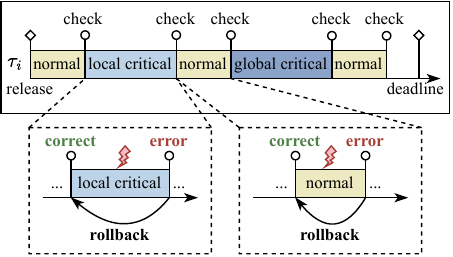}
\caption{Fault tolerance of normal and local critical sections}
\label{fig:check1}
  \vspace{-1em}
\end{figure}

As shown in Figure \ref{fig:check1}, MSRP-FT establishes checkpoints not only at the beginning and end of each task but also around critical sections within a task’s execution. 
These checkpoints divide execution into normal sections (no shared resources) and critical sections, where shared resources are accessed. Critical sections include local (intra-core) and global (inter-core) resource access. Each checkpoint detects faults for the previous segment and saves the states for the next segment (if applicable).
The setup enables faults to be tolerated in time after the execution of a critical section, preventing transitive errors from affecting other tasks and avoiding unnecessary large-scale re-executions.

If a fault is detected in a normal section, the system rolls back to the most recent checkpoint and re-executes the faulty segment only.
When managing local shared resources, the same ceiling protocol is adopted from MSRP.  Each local resource $r^x$ is assigned a ceiling priority equal to the highest priority of any task that accesses it. When a task locks a local resource, its priority is temporarily elevated to the resource’s ceiling priority. As shown in Figure \ref{fig:check1}, the fault-tolerance approach applied is the same as a normal section where the system initiates rollback and re-execution.

\begin{figure}[h]
\centering
\includegraphics[width=0.9\columnwidth]{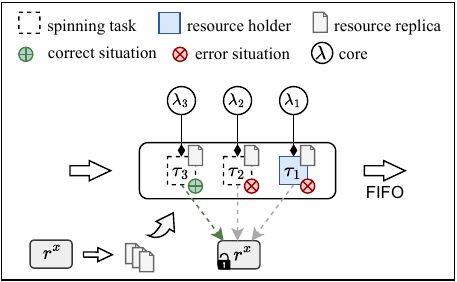}
\caption{Fault tolerance of global critical sections}
\label{fig:check2}
\end{figure}

Additionally, MSRP-FT incorporates a novel fault-tolerance mechanism specifically designed for global resources. As shown in Figure \ref{fig:check2}, when a task requests a global resource, it enters a FIFO queue, where the resource is allocated in FIFO order. The task at the head of the queue (i.e., $\tau_1$) becomes the resource holder, while other tasks remain in a busy-wait state, spinning. The resource holder reads the shared resource and duplicates its execution into replicas, enabling all spinning tasks (i.e., $\tau_2$ and $\tau_3$) to assist in executing its critical sections in parallel. Replicas are executed locally, once a fault is detected, the replica is re-executed on the same core. A resource update is performed as an atomic action and only fault-free executions (i.e., $\tau_3$) are eligible for updating the resource. Once the resource is updated, all replicas are terminated, the head task exits the queue, and the next task in line (i.e., $\tau_2$) repeats the process.

\begin{figure}[t]
    \centering
    \includegraphics[width=0.8\columnwidth]{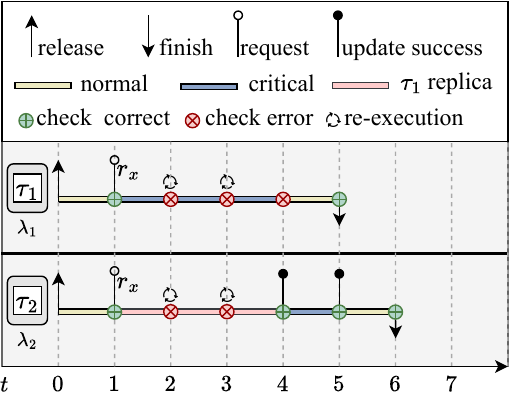}
    \caption{Worst-case resource accessing of $\tau_2$ under  MSPR-FT}
    \label{MSRPFT}
\vspace{-1em}
\end{figure}

% We use $n^x_j$ to denote the total execution number of a request of $\tau_j$ to $r^x$, which contains $f^x_j$ failures and one successful execution.  For example, $n^x_1=6$ indicates the request from $\tau_1$ needs to execute 6 times for its access to $r^x$. 
The amount of time that a task needs to execute all its  $n^x_j$ executions is calculated as $\lceil n^x_j/k_j \rceil \cdot c^x$ \cite{chen2022msrp}, where $k_j$ represents the number of execution units available (including the task itself and helper tasks positioned behind it in the FIFO queue).
To intuitively illustrate the performance of MSRP-FT, we demonstrate the following example which indicates the worst-case resource accessing time of $\tau_2$.

\begin{example}\label{example:MSRPFT}
As shown in Figure \ref{MSRPFT}, $\tau_1$ with $n^x_1 = 6$ executes on $\lambda_1$, and $\tau_2$ with $n^x_2 = 1$ executes on $\lambda_2$. At time $t = 1$, both tasks simultaneously request access to $r^x$. The critical section length is $c^x = 1$. $\tau_1$ becomes the head of the FIFO queue, with $\tau_2$ following. Both begin executing their critical sections concurrently, with $\tau_2$ assisting by executing a replica of $\tau_1$'s critical section. After repeated failures and retries, $\tau_2$ successfully executes the 6th execution at $t = 4$, updates the resource, and helps $\tau_1$ complete, allowing $\tau_1$ to leave the FIFO queue at $t = 4$ and finally finishes at $t = 5$. At $t = 4$, $\tau_2$ then executes its own critical section. As no other tasks are queued, it executes alone and, with $n^x_2 = 1$, completes successfully and updates at $t = 5$. $\tau_2$ finally finishes at $t = 6$.
\end{example}

% we demonstrate the worst-case resource-accessing time for a single request from  $\tau_i$ to a shared resource $r^x$ under MSRP-FT, which is the fundamental of the worst-case response time analysis.
% As shown in Figure \ref{fig:a}, we have a system with five cores, labelled $\lambda_1$ to $\lambda_5$. Each core runs a single task, with only one release per task. Each task, as shown in the figure, requests access to the global resource $r^x$ exactly once during its execution. 

% for $\tau_i$, all other tasks on remote cores requesting $r^x$ (i.e., $\tau_2, \tau_3, \tau_4, \tau_5$) are positioned ahead of $\tau_i$ in the FIFO queue. Notably, $\tau_i$ is placed at the end without any helper support. Furthermore, tasks with a higher execution count $n$ are positioned closer to the end of the FIFO queue, with fewer available helpers. 

Under MSRP-FT, without considering coordination overhead, the worst-case resource-access time for a given task $\tau_i$ (e.g., $\tau_2$ in Example \ref{example:MSRPFT}) is determined by the following two conditions. 
First, $\tau_i$ suffers the maximum possible remote blocking (blocking from remote cores due to global resource contention) by being placed at the end of the FIFO queue, with no external helpers available which means it must execute its $n^x_i$ replicas sequentially on its own core.
Second, the maximum number of remote requests (denoted as $m$) ahead of $\tau_i$ in the FIFO queue are arranged such that tasks with higher execution counts $n^x_j$ appear closer to the end of the queue with fewer helpers. 
Specifically, the first remote request (with the largest $n^x_j$) is assisted by only $k_1 = 2$ cores (including itself and $\tau_i$), the second by $k_2 = 3$, and so on. The worst-case access time of request $\tau_i$ is expressed in Equation~\eqref{eq:OneAccessE} \cite{chen2022msrp}.
All terms are measured in units of $c^x$, the execution length of a single critical section of resource $r^x$.
Specifically, $n^x_i$ is the number of re-executions caused by faults, and $S^x_i$ is the blocking delay due to remote requests.
We further observe that each remote request contributes at least one unit of $c^x$ to the blocking, as $\lceil n^x_j / k_j \rceil \geq 1$. Hence, in the worst case, $S^x_i \geq m$.

\begin{equation}\label{eq:OneAccessE}
E^x_i = (n^x_i+S^x_i ) \cdot c^x,\quad S^x_i \geq m
\end{equation}

% As shown in Figure~\ref{fig:b}, $\tau_1$ happens to suffer the worst-case spin delay when requesting $r^x$.
% The total resource access time in the Figure is given by:
% \[
% \left( \lceil 2/5 \rceil + \lceil 3/4 \rceil + \lceil 4/3 \rceil +\lceil 5/2 \rceil+ 2 \right) \cdot c^x = 9\cdot c^x.
% \]

Based on the characteristics of MSRP-FT we address the following corollary which will be useful later.

\begin{corollary}\label{coro:MSRP-FT}
Under MSRP-FT, the remote blocking units $S^x_i$ 
suffered by a given task $\tau_i$ accumulates with the increase of the $n^x_j$  of the remote requests. More specifically, if more than one remote request has $\lceil n^x_j/k_j \rceil > 1$, then $S^x_i > m+1$. For example, if a task with $k_j=2$ then incurs 2 faults (i.e. $n^x_j=3$), it satisfies $\lceil 3/2 \rceil = 2 > 1$, thus contributing more than one unit of remote blocking. 
\end{corollary}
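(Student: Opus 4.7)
The plan is to re-derive $S^x_i$ as an explicit sum of per-request contributions from the MSRP-FT description, and then read off the claim as a simple ceiling-arithmetic inequality. First I would recall that, under the worst-case queue arrangement described just before Equation~\eqref{eq:OneAccessE}, the $m$ remote requests ahead of $\tau_i$ are served strictly in FIFO order: the next head of the queue cannot begin its critical section until the current head has completed all of its $n^x_j$ executions with $k_j$ parallel units (itself plus the spinning helpers behind it). By the MSRP-FT completion time $\lceil n^x_j / k_j \rceil \cdot c^x$ per head, and by additivity over the $m$ heads preceding $\tau_i$, this gives
\begin{equation*}
S^x_i \;=\; \sum_{j=1}^{m} \left\lceil \frac{n^x_j}{k_j} \right\rceil.
\end{equation*}

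Next I would establish the two claims of the corollary from this expression. Monotonicity is immediate because the ceiling function is non-decreasing in its numerator, so increasing any $n^x_j$ can only increase the corresponding term and hence $S^x_i$. The baseline bound $S^x_i \geq m$ follows because $n^x_j \geq 1$ and $k_j \geq 1$ force every term to be at least $1$, recovering the inequality already stated in Equation~\eqref{eq:OneAccessE}.

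For the strict bound $S^x_i > m+1$, suppose two distinct remote requests $\tau_{j_1}$ and $\tau_{j_2}$ satisfy $\lceil n^x_{j_1}/k_{j_1}\rceil \geq 2$ and $\lceil n^x_{j_2}/k_{j_2}\rceil \geq 2$. Then those two contribute at least $4$ units of $c^x$ jointly, while each of the remaining $m-2$ requests contributes at least $1$, so
\begin{equation*}
S^x_i \;\geq\; 4 + (m-2) \;=\; m+2 \;>\; m+1.
\end{equation*}
The numerical illustration in the statement ($k_j = 2$, $n^x_j = 3$) confirms that this regime is reachable: two faults during a head execution push $\lceil n^x_j/k_j\rceil$ from $1$ to $2$.

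The main obstacle, such as it is, will be justifying the decomposition of $S^x_i$ as a sum of ceilings, since Equation~\eqref{eq:OneAccessE} only reports the aggregate bound $S^x_i \geq m$. I would therefore anchor the additivity in the FIFO serialisation property highlighted in Section~\ref{sec:MSRP-FT} and in the timing behaviour exhibited in Example~\ref{example:MSRPFT}, where the next head begins exactly when the previous head finishes its final (successful) replica and updates the resource. Once that decomposition is in place, the rest of the argument is a two-line ceiling inequality.
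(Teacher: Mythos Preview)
Your proposal is correct and follows exactly the reasoning the paper intends: the paper does not give a separate proof of Corollary~\ref{coro:MSRP-FT} but states it as a direct consequence of the per-head cost $\lceil n^x_j/k_j\rceil\cdot c^x$ and the FIFO serialisation described just before Equation~\eqref{eq:OneAccessE}, with the numerical example embedded in the statement itself. Your explicit decomposition $S^x_i=\sum_{j=1}^{m}\lceil n^x_j/k_j\rceil$ and the two-term ceiling inequality simply make rigorous what the paper leaves implicit.
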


% For a given remote request with larger $n^x_j$ and result in  
% $\lceil n^x_j/k \rceil > 1 $, then resource-accessing time will accumulate. To be clear, we have the following lemma.

% \begin{lemma}\label{lemma:MSRP-FT}
% If the largest $n^x_j$ of remote requests is larger than 2. Then $S^x_i \geq m+1$.  
% \end{lemma}

% \begin{proof}
% According to the worst-case analysis, the largest $n^x_j$ will be placed at the end of the FIFO queue and have k=2 execution units helper. If the largest $n^x_j$ is larger than 2, then ${n^x_j/k } > 1$, then the $m$ remote requests will contribute at least $m+1$ unit of spin delay, hence $S^x_i \geq m+1$.
% \end{proof}

\subsection{Limitations of state-of-the-art}\label{sec:limitation}

The MSRP-FT protocol improves the fault tolerance efficiency by leveraging the parallel execution of replicas. While this approach improves fault recovery efficiency, it introduces new challenges as summarized below.

% related to resource contention with frequent occurrences with faults, scheduling complexity, overhead and the effectiveness of parallel fault mitigation. In this subsection, we summarize the key limitations of MSRP-FT and provide the motivation for our proposed improvements.

% \begin{limitation}\label{limit:lock}
% Due to fault tolerance, tasks hold the lock longer, leading to severe spin delays, especially when the number of faults is high.
% \end{limitation}

\begin{limitation}\label{limit:lock}
With a high number of faults, managing resource accesses sequentially, even with the helping mechanism, can still result in substantial delays.
\end{limitation}

Although MSRP-FT accelerates fault tolerance by allowing spinning tasks to execute replicas in parallel, each task must still wait for the resource access in order.
According to Corollary~\ref{coro:MSRP-FT}, when a remote task experiences frequent faults (i.e., a high $n^x_j$), it can significantly delay the execution of tasks behind it in the FIFO queue, i.e., increase the remote blocking ($S^x_i\cdot c^x$).  

% In the worst case, most of task's executions are faulty and cannot be submitted, prolonging its lock-holding time. If other tasks experience fewer faults and can complete their execution earlier, they cannot submit first, potentially delaying the FIFO queue progress.

\begin{limitation}\label{limit:complex}
Utilising spinning tasks to assist requires complicated scheduling and may incur substantial overhead.
\end{limitation}

MSRP-FT introduces unavoidable coordination overhead due to its helping mechanism. When a task is at the head of the FIFO queue, it must construct a shared operation descriptor, we call it \textit{wrap}, which includes a function pointer for the critical section logic, a reference to the shared resource, and any relevant execution context \cite{shi2017implementation,zhao2020complete}. This structure is written to a globally visible memory location such as DRAM or the last level cache, and a readiness flag is set using a memory barrier to ensure cross-core visibility~\cite{losa2017transparent}. This step contributes to the per-head metadata setup cost, denoted $O_{\text{wrap}}$.

Each task behind the head in the queue must monitor the shared readiness flag and, once available, retrieve the wrap, copy the shared resource locally, and execute the head task’s critical section. This cooperative execution introduces the per-replica overhead $O_{\text{replica}}$, which includes coordination, local preparation, and control transfer costs \cite{zhao2020complete,shi2017implementation}. When the task later becomes the head itself, it must construct its own wrap, incurring a one-time setup cost denoted as $O_{\text{self\_wrap}}$. 
Therefore, if a task is preceded by $m$ other tasks in the FIFO queue, the total overhead it incurs for a single resource request is given by:
\begin{equation}
O_{\text{total}}(m) = m \cdot (O_{\text{wrap}} + O_{\text{replica}}) + O_{\text{self\_wrap}}
\label{eq:MSRP-FT_overhead}
\end{equation}

\begin{limitation}\label{limit:faulttype}
Replica-based parallel execution, which relies on identical execution logic and synchronized data at the same time point, may be ineffective against deterministic or common-mode faults.
\end{limitation}

In MSRP-FT, all replicas execute identical operations derived from the same wrap descriptor, leading to highly synchronized control flow and memory access patterns. Consequently, faults triggered by specific execution sequences or timing interference may simultaneously affect all replicas, causing coordinated failure.

\section{LEFT-RS Protocol} 
\label{sec:protocol}

In this section, we propose~\name~ for multicore real-time systems.
The~\name~ introduces novel mechanisms to handle faults in accessing global 
resources more efficiently, aiming to improve the system's schedulability. We first introduce the basic setup which includes the part without global resource sharing. Then, we show how~\name~ manages global resource sharing with faults and its design rationale.

\subsection{Normal and local critical sections}

To avoid transitive errors and prevent unnecessary resource contention caused by re-executing the entire task, we continue to adopt the checkpointing mechanism from MSRP-FT, as illustrated in Section \ref{sec:MSRP-FT}. Checkpoints are placed not only at the beginning and end of each task, but also around critical sections within the task’s execution.
These checkpoints divide execution into normal sections (without shared resources) and critical sections (global and local). 
Each checkpoint, if applicable, detects faults in the preceding segment and saves the state for the next segment.
In cases where a task has few or no critical sections, resulting in large normal sections, checkpoints can be introduced to subdivide them and reduce re-execution range. However, since our focus is on fault tolerance in critical sections, we do not further explore checkpoint placement within normal sections.

If a fault occurs in the normal or local critical section, the system simply applies rollback and re-execution. 
For the management of local resources, we assume a ceiling priority protocol, where a task accessing a resource has its priority raised to the ceiling priority, which is equal to the highest priority among local tasks that access the same resource.

% If a task incurs faults in a local critical section, it is handled in the same way as a normal section, where rollback and re-execution are applied.

% \begin{enumerate}
% \item Each local resource $r^x$ is assigned a \textit{ceiling priority} equal to the highest priority of any task that accesses it.

% \item When a task locks a resource, its priority is temporarily elevated to the resource’s ceiling priority. 

% \item A task can experience blocking at most once upon arrival. Specifically, it may be blocked by a lower-priority task that has already acquired a resource with a ceiling priority equal to or higher than the current task’s priority.

% \item Once a task begins execution (after arrival blocking), it will not encounter further blocking. This property ensures bounded execution times.

% \item Deadlocks are inherently prevented because the ceiling priority protocol guarantees mutual exclusion without circular dependencies.
% \end{enumerate}

\subsection{Global resource sharing }

In this subsection,
we present how global resource sharing is managed by~\name~ through a series of rules.
The presentation order of these rules does not 
follow the execution order of a task when accessing shared resources, instead, it follows the design rationale. In the end, Example \ref{ex:protocol} will show the overview of the entire execution sequence of tasks following the rules.

The core objective of the protocol is to improve resource access efficiency under fault-tolerant conditions. To achieve this, the protocol aims to allow tasks that complete without faults to update the global resources earlier when others encounter faults. This reduces unnecessary delays and enhances overall system throughput.
Enabling concurrent execution of critical sections is key to achieving this goal. 
% By allowing tasks to proceed independently, tasks unaffected by faults can continue without being blocked by those encountering errors. 
This is realized through a lock-free approach, as described in Rule~\ref{rule1}, which is feasible as concurrent reads do not modify the shared state.

\begin{ruledef}\label{rule1}
When requesting a global resource, all tasks are allowed to read the resource simultaneously and proceed to their critical sections locally without acquiring a global lock.
\end{ruledef}

While concurrent execution improves efficiency, it introduces two key challenges that must be addressed: (1) concurrent updates to global resources, which may result in data races, and (2) the use of stale data, where tasks operate on outdated information due to limited update visibility.

The first concurrency issue is mitigated by Rules~\ref{rule2} and~\ref{rule3}. In Rule~\ref{rule2}, the FIFO queue records the order of task requests to regulate the update sequence, rather than enforcing resource accessing order as in MSRP-FT. Rule~\ref{rule3} ensures update priority in FIFO order to provide fairness and restricts resource updates to a single task at a time, thus avoiding race conditions.

\begin{ruledef}\label{rule2}
When a task accesses a global resource, it is added to a FIFO queue.  
\end{ruledef}

\begin{ruledef}\label{rule3}
Upon successful (fault-free) execution, if multiple tasks request an update concurrently, the update is granted in FIFO order, and updates must be performed atomically.
\end{ruledef}

The second concurrency issue is resolved by Rule~\ref{rule4}, which ensures that each update triggers the re-read and re-execution of other tasks in the FIFO queue to avoid outdated execution.

\begin{ruledef}\label{rule4}
When a task successfully updates the global resource, it leaves the FIFO queue. All other tasks operating on outdated copies of the resource are notified to abort their current executions, discard local copies, and restart their critical sections using the updated global resource.
\end{ruledef}

In addition, tasks also re-execute due to transient faults encountered during their critical sections as defined in Rule~\ref{rule5}. This leads to two types of re-execution behaviours of critical sections as shown in Definitions \ref{def:data-induced} and \ref{def:fault-induced}.

\begin{ruledef}\label{rule5}
Upon unsuccessful (faulty) execution, the task re-executes the faulty critical section individually.
\end{ruledef}

\begin{definition}\label{def:data-induced}
[Data-Induced Re-execution]
Re-execution is triggered when another task updates the shared resource, invalidating the current task’s local copy and requiring it to restart with the latest data.
\end{definition}

\begin{definition}\label{def:fault-induced}
[Fault-Induced Re-execution]
Re-execution is triggered when a task detects a transient fault during its critical section, requiring a retry to ensure correctness.
\end{definition}

% Update control and consistency enforcement resolve the main concurrency issues.
The main concurrency issues are addressed by regulating concurrent updates and data consistency enforcement.
However, as resource accesses can start and finish at different times due to variations in request arrivals and execution paths, tasks may complete their executions out of order. 
Thus, regulating the timing of updates and re-executions remains critical to prevent unfairness and unpredictable delays. To address this issue, the protocol introduces a comprehensive synchronisation mechanism composed of three tightly integrated rules.
First, Rule~\ref{rule6} ensures that tasks joining the FIFO at different times begin execution simultaneously, aligning their execution windows and reducing initial timing gaps.

\begin{ruledef}\label{rule6}
Upon joining the FIFO queue, if the head task in the queue is in the middle of executing a critical section, subsequent tasks enter a \textit{synchronisation period} and wait for the head task to finish its current execution; otherwise, the synchronisation step is skipped.
\end{ruledef}

However, tasks may still finish their critical sections earlier than expected.
For example, consider a scenario where $\tau_a$, $\tau_b$, and $\tau_c$ are placed in FIFO order from front to end. If $\tau_c$ completes earlier and updates the resource, this may trigger data-induced re-executions of $\tau_a$ and $\tau_b$. Furthermore, later-joining tasks may starve $\tau_a$ and $\tau_b$ for the same reason.
To address this issue, Rule~\ref{rule7} governs the update of a shared resource by tasks upon successful execution. Even if a task finishes earlier, it cannot update the global resource until all preceding tasks have either been completed or encountered faults. This prevents early completion from triggering unnecessary data-induced re-execution of preceding tasks and ensures update fairness, which completes Rule \ref{rule3}.

\begin{ruledef}\label{rule7}
Upon successful (fault-free) execution, the head task in the FIFO queue can directly attempt to update the shared resource. Other tasks (apart from the head) can attempt to update only after confirming that all preceding tasks have encountered faults during their current execution.
\end{ruledef}

In this scenario, suppose $\tau_c$ completes successfully. According to Rule~\ref{rule7}, it must wait for $\tau_a$ and $\tau_b$ to finish before updating. If $\tau_a$ completes first but encounters faults and immediately re-executes, then when $\tau_b$ eventually finishes, $\tau_c$ must still wait for $\tau_a$'s new execution, which introduces additional delays. Therefore, Rule~\ref{rule8} complements Rule \ref{rule5} by controlling fault-induced re-executions timing.  This prevents re-executing tasks from interfering with later tasks' updates, avoiding serial delays from unsynchronised retries.

\begin{ruledef}\label{rule8}
Upon unsuccessful (faulty) execution, the task waits for all tasks in the FIFO queue to finish their current executions and then re-executes the faulty critical section by itself.
\end{ruledef}

These design elements work collectively to realize two core rationales that govern the overall protocol behaviour.

\begin{itemize}
\item \textit{Rationale 1}: A task is re-executed due to the update of the global resource by preceding tasks in the FIFO queue. 
\item \textit{Rationale 2}: A task is re-executed due to an update made by later-arriving tasks. However, such re-executions overlap with fault-induced re-executions.
\end{itemize}

Rationale 1 ensures fairness and bounded re-execution: tasks re-execute due to their relative positions in the FIFO queue. Under Rule~\ref{rule4}, each update to the global resource triggers the departure of a task from the FIFO queue. Therefore, the number of re-executions is predictable. 
Rationale 2 is the most critical for achieving efficient resource access under fault tolerance. If a preceding task encounters faults, it will re-execute regardless. If this re-execution overlaps with the resource update made by a later-arriving task, it benefits the later-arriving task by reducing delay without adversely affecting the preceding task. Since the forced waiting time occurs only when tasks complete earlier than expected, the total of execution and waiting time still falls within the duration of a single critical section, which will be further proved in Lemma~\ref{lemma:onesync}.

Finally, Rules \ref{rule9} and \ref{rule10} complete the protocol.
Rule \ref{rule9}  can be implemented by raising its active priority to the system’s maximum level, ensuring it is not preempted during global resource access.
This behaviour is inherited from MSRP~\cite{gai2001minimizing}, chosen for its simplicity. While ceiling-based approaches, such as MrsP~\cite{burns2013schedulability}, may further reduce arrival blocking, they are beyond the scope of this study.

\begin{ruledef}\label{rule9}
When a task requests a global resource, it becomes non-preemptive with respect to local tasks and remains so until it completes the resource access.
\end{ruledef}

Rule \ref{rule10} completes Rule \ref{rule6} by eliminating unnecessary delays. 
We propose Lemma \ref{lemma:synfree} to support Rule \ref{rule10}.

\begin{ruledef}\label{rule10}
The initial synchronisation period may be skipped if a task joins the FIFO queue with the assurance that all preceding tasks will not incur faults (i.e., $n^x_j = 1$, or they have already encountered the maximum number of allowable faults), or if the system's fault-tolerance mode is disabled.
\end{ruledef}

\begin{lemma}\label{lemma:synfree}
    For a given task  $\tau_i$, if its preceding tasks in the FIFO queue will not incur faults, then the request of  $\tau_i$ does not need to enter the synchronisation period.
\end{lemma}

\begin{proof}

As the preceding tasks must start no later than  $\tau_i$, a resource update must be made within an execution period $c^x$
from preceding tasks since they will not encounter faults. Even if $\tau_i$ finishes earlier, 
Rule~\ref{rule7} is sufficient to guarantee that the task can only update the resource when the preceding tasks finish, which aligns with Rationale 1 and 2.
\end{proof}

\begin{figure}[t]
\centering
\includegraphics[width=\columnwidth]{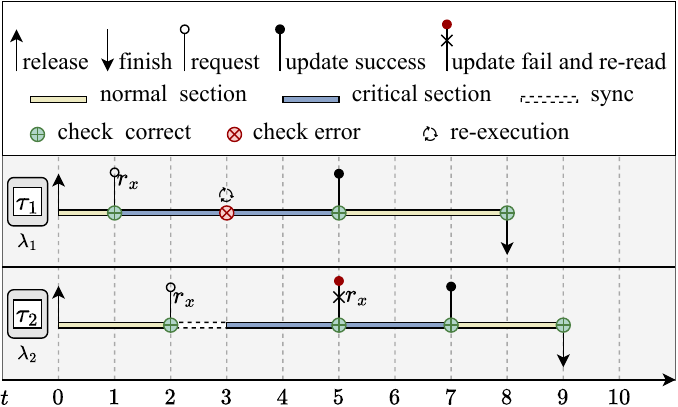}
\caption{Example of global resource management of LEFT-RS} 
\label{fig:management}
\vspace{-1em}
\end{figure}

These rules are visualized by Example \ref{ex:protocol} which illustrates the complete execution sequence of a task request for a global resource under LEFT-RS.

\begin{example}\label{ex:protocol}
As shown in Figure \ref{fig:management}, $\tau_1$ and $\tau_2$ are executing on cores $\lambda_1$ and $\lambda_2$, respectively. At $t=1$, $\tau_1$ requests access to $r^x$ with $c^x = 2$. It becomes locally non-preemptive and starts executing its critical section immediately (Rules \ref{rule1} and \ref{rule9}) without a synchronisation period (Rule \ref{rule6}). 
At $t=2$, $\tau_2$ requests access to $r^x$ and the current FIFO queue order becomes $\tau_1$ and $\tau_2$, from front to end. According to Rule \ref{rule6}, since $\tau_1$ is in the middle of its critical section execution at $t=2$, $\tau_2$  enters a synchronisation period. 
At $t=3$, as $\tau_1$ fails its execution and all executions are synchronised, $\tau_1$ re-executes (Rules \ref{rule5} and \ref{rule8}) and $\tau_2$ reads the shared resource and begins executing its critical section in parallel with $\tau_1$ (Rule \ref{rule1}). At $t=5$, the executions of $\tau_1$ and $\tau_2$ are correct and synchronised, $\tau_1$ successfully updates $r^x$, in accordance with the update procedure (Rules \ref{rule3} and \ref{rule7}). At $t=5$, after the resource update, $\tau_1$ leaves the FIFO queue (Rule \ref{rule4}) and completes its execution at $t=8$. Also, at $t=5$, $\tau_2$ is signalled to re-read and re-execute the shared resource (Rule \ref{rule4}). At $t=7$, $\tau_2$ updates $r^x$ successfully (Rules \ref{rule3} and \ref{rule7}), leaves the FIFO queue (Rule \ref{rule4}), and finishes at $t=9$.
\end{example}

Overall, LEFT-RS allows tasks to start resource accessing concurrently without being served in FIFO order. Each task can exit the FIFO queue earlier upon successful execution if the preceding tasks incur faults. Such a design can effectively address Limitation~\ref{limit:lock}. Moreover, ~\name~ avoids the coordination overhead described in Equation~\eqref{eq:MSRP-FT_overhead} by eliminating cooperative replica execution. Tasks do not construct or publish shared wrap descriptors, nor do they assist in executing other tasks' critical sections. Instead, each task executes its own critical section independently after retrieving the shared resource state which addresses Limitation \ref{limit:complex}.
% As a result, ~\name~ avoids all coordination costs associated with $O_{\text{wrap}}$, $O_{\text{replica}}$, and $O_{\text{self\_wrap}}$, and eliminates the spin-induced delays and metadata setup overheads required by helper-based execution. 
Other overheads, such as local fault detection and thread communications, are omitted from the discussion since they are either common to both protocols or have negligible performance impact. Since tasks under ~\name~ operate on their own logic and execution paths, faults affecting one task are less likely to propagate to others. This design inherently reduces the risk of synchronised failure and improves fault isolation across concurrent executions which addresses Limitation \ref{limit:faulttype}.
The effectiveness of the proposed rules and rationales will be further demonstrated in the worst-case analysis of resource access in Section~\ref{sec:performance}.

\subsection{Implementation discussion}

The lock-free approach is well-established and has a long history \cite{anderson1997real}. LEFT-RS is applicable to shared resources where tasks can read a consistent snapshot and perform local computation before a single atomic update. Typical examples include status flags, control registers, configuration parameters, and shared counters. These structures are common in embedded systems and align with LEFT-RS’s design, which favours safe concurrent reads and avoids complex coordination. Furthermore, both LEFT-RS and MSRP-FT \cite{chen2022msrp} require atomic updates and execute critical sections concurrently and locally, meaning their theoretical applicability spectra are the same.

LEFT-RS requires each task to copy the shared resource into its private memory (e.g., L1/L2 cache) and execute independently, which is supported by most commercially off-the-shelf (COTS) architectures. From a software perspective, Rule \ref{rule4} involves signalling other tasks, while Rules \ref{rule6}, \ref{rule7}, and \ref{rule8} involve checking the execution status of other tasks. Such coordination can be achieved through lightweight communication mechanisms, such as shared status flags or conditional signalling.

\section{Theoretical Comparison with MSRP-FT} 
\label{sec:performance}

The key distinction between the proposed approach and MSRP-FT lies in the management of global resource access.
Before carrying out the full analysis,
in this section, we analyse the advantages of the proposed approach by examining the worst-case resource accessing time for one single request of a given task $\tau_i$ to a global resource $r^x$.

% The proposed approach eliminates spinning and waiting for locks. Each task can read the shared resources and start its operations as soon as it requests. Unlike 

According to Rule \ref{rule5}, 
each task under the proposed approach 
re-executes independently without assistance when it incurs faults. 
Therefore, in the worst-case scenario without accounting for the delay caused by remote tasks, a resource request from $\tau_i$ will execute $n^x_i$ times its critical sections, consisting of $f^x_i$ failed executions and one successful submission.

According to Rules \ref{rule4} and \ref{rule7}, each task will perform data-induced re-execution caused by 
the resource updates from preceding tasks in the FIFO queue. Each resource update will lead to the leave of a task from the FIFO queue. Therefore, the amount of data-induced re-execution can be bounded by the following lemma.
% When the execution is successful,  a task may suffer blocking by tasks ranked ahead of it in the FIFO queue due to submission according to Rule \ref{rule4}. Furthermore, as stated in Rule \ref{rule6}, if a new submission occurs during execution, the task is signalled to re-execute with the updated data. This scenario, termed outdated re-execution, occurs When either the preceding or later tasks in the queue submit new results.

% Whenever a task experiences submission blocking, other tasks must make a new submission. Consequently, the blocked task must undergo outdated re-execution while waiting.
% As a result, the number of submission blockings a task incurred from its single resource request can directly correspond to the count of its outdated re-executions.
% In worst-case analysis, considering only outdated re-executions is sufficient. We provide the following lemma to bound the worst-case outdated re-executions that a task can incur from its single resource request.

\begin{lemma}\label{lemma:outdated}
For a given request from a task $\tau_i$ to $r^x$, 
if there are at most $m$ ($\leq |\Lambda|$ ) remote requests that can request $r^x$ at the same time, then the request of $\tau_i$ can incur at most $m$ instances of data-induced re-execution.
\end{lemma}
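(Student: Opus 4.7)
The plan is to establish a bijection (or at least an injection) between the data-induced re-executions suffered by $\tau_i$'s request and the resource updates performed by remote requests that precede or race with $\tau_i$ in the FIFO queue. By Rules~\ref{rule4} and~\ref{rule7}, a data-induced re-execution of $\tau_i$ is triggered only when $\tau_i$ detects that a preceding task in the FIFO queue has performed an update on $r^x$ (invalidating $\tau_i$'s locally read copy). Thus the number of data-induced re-executions is bounded above by the number of updates on $r^x$ that occur between $\tau_i$ entering the FIFO queue and $\tau_i$ completing its own submission.

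Next I would argue that each remote request ahead of $\tau_i$ can contribute at most one such update. The key fact is that an update is performed atomically on a fault-free execution, and once that update succeeds, the issuing task leaves the FIFO queue; there is no mechanism under the proposed protocol for a completed request to re-enter the queue and update again before $\tau_i$'s own request finishes. Consequently, the map from ``data-induced re-execution suffered by $\tau_i$'' to ``remote request that caused the corresponding update'' is injective on the set of remote requests racing with $\tau_i$ on $r^x$.

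With that injection in hand, the final step is a simple counting argument: by hypothesis, at most $m$ remote requests can concurrently request $r^x$ at the same time as $\tau_i$, so the image of the injection has size at most $m$, yielding at most $m$ data-induced re-executions for $\tau_i$. I would also briefly note the boundary cases: a request arriving after $\tau_i$ in the FIFO queue cannot update $r^x$ before $\tau_i$ by the FIFO update order enforced in Rule~\ref{rule7}, so it does not contribute; and any request that repeatedly faults without updating never triggers a data-induced re-execution of $\tau_i$ (its faults only affect its own re-executions, handled in Lemma-style arguments tied to Rule~\ref{rule5}).

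The main obstacle I anticipate is the rigorous justification that no remote request can ``contribute twice'' to data-induced re-executions of $\tau_i$. This hinges on a careful reading of the lock-free protocol: one must ensure that the atomicity of the update and the FIFO exit rule together preclude any scenario in which a task updates, leaves the queue, and then (within the lifetime of $\tau_i$'s request) re-enters and updates again. Once this single-shot update property is pinned down from the protocol rules, the counting bound $m$ follows immediately.
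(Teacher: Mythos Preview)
Your core argument matches the paper's: each data-induced re-execution corresponds to a successful update, each update removes one task from the FIFO queue (Rule~\ref{rule4}), and at most $m$ preceding tasks can sit ahead of $\tau_i$, giving the bound of $m$. However, your treatment of later-arriving requests is not quite right. You assert that a request arriving after $\tau_i$ ``cannot update $r^x$ before $\tau_i$ by the FIFO update order enforced in Rule~\ref{rule7},'' but the paper's own proof says the opposite: under Rule~\ref{rule7}, a later-arriving task \emph{can} cause a data-induced re-execution of $\tau_i$, namely when $\tau_i$ itself has faulted. The paper's resolution is not that such updates are impossible, but that whenever this happens the re-execution coincides with a fault-induced re-execution already charged to the $f^x_i$ budget, so it is not counted again here. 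Your dismissal of this boundary case on FIFO-ordering grounds therefore misreads the protocol.

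Separately, the ``main obstacle'' you flag---a remote task updating, leaving, and re-entering the queue within $\tau_i$'s lifetime---is not something the paper needs to engage with. The hypothesis fixes at most $m$ remote requests contending with $\tau_i$, and once $\tau_i$ enters the queue that set of predecessors is fixed; the one-shot update-then-exit per request follows directly from Rule~\ref{rule4} without further subtlety. Your injection framing is correct but more machinery than the argument requires.
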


\begin{proof}
Data-induced re-execution occurs due to a successful resource update in the system. In the meantime, each update removes a task from the FIFO queue (Rule \ref{rule4}). There are at most $m$ instances of preceding tasks in the queue, therefore, the request of  $\tau_i$ can incur at most $m$ data-induced re-executions from preceding tasks. According to Rule \ref{rule7}, later-arriving tasks can only cause data-induced re-execution to $\tau_i$ when it incurs faults. This is already accounted for in the $f^x_i$ failed executions due to faults.
\end{proof}

According to Rule \ref{rule6}, each task may enter the synchronization period when requesting a resource. We propose the following lemma to bound the synchronization period.

\begin{lemma}\label{lemma:overhead}
The maximum synchronization overhead for a given resource request is one unit of $c^x$.
\end{lemma}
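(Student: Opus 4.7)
The plan is to interpret the synchronization period in Rule~\ref{rule6} as the interval during which a requesting task must wait for a consistent view of the shared resource while another task is mid-publish, and then to show that this wait is bounded by a single atomic publish of a $c^x$-length computation. Even though reads and local executions in LEFT-RS proceed concurrently, the commit of an update is the one point where serialization between writers (and between a writer and an arriving task) is unavoidable; the lemma quantifies exactly that serialization cost.

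First I would argue that a given request from $\tau_i$ can collide with at most one in-progress update at its moment of arrival. By Rule~\ref{rule4}, each successful update is a single atomic action that dequeues exactly one task, so at any instant at most one publish step is in progress. Thus when $\tau_i$ enters the synchronization period, it does so with respect to at most one concurrent commit, not a cascade of them.

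Second, I would bound the duration of that single publish step by $c^x$. The commit in LEFT-RS applies a precomputed value to the shared location and makes it visible; in the worst case this step is no longer than executing a single critical section of $r^x$. Hence $\tau_i$'s synchronization wait is at most $c^x$. Any further waiting that $\tau_i$ experiences after successfully joining the queue stems from later updates by preceding tasks and is counted as data-induced re-execution under Lemma~\ref{lemma:outdated}, not as synchronization overhead.

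The main obstacle will be cleanly separating synchronization cost from the re-execution cost already captured by Lemma~\ref{lemma:outdated}, so that the two are not double-counted. The crux is to show that once $\tau_i$ has cleared the synchronization period and is in the queue, any additional waiting is attributable to commits by preceding tasks triggering data-induced re-execution rather than to a fresh synchronization event. Once this boundary is fixed using the atomicity guarantee in Rule~\ref{rule4} and the entry semantics in Rule~\ref{rule6}, the single-$c^x$ bound drops out immediately.
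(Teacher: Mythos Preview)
Your interpretation of the synchronization period is off, and this makes the core step of your argument unjustified. In the protocol, Rule~\ref{rule6} does not describe waiting while another task is ``mid-publish'' of an atomic commit; rather (as the proof of Lemma~\ref{lemma:onesync} makes explicit, ``Rule~\ref{rule6} forces all tasks to start simultaneously''), it is a barrier-style alignment: a newly arriving request waits for the head task to finish its \emph{current critical-section execution} so that all queued tasks can begin the next execution round together. The paper's proof is then immediate: that residual execution has length at most $c^x$, and a request joins the queue (hence synchronizes) only once.

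Under your reading, the step ``the commit \ldots\ in the worst case is no longer than executing a single critical section of $r^x$'' is the load-bearing claim, and you give no reason for it. An atomic publish of a precomputed value should be far shorter than $c^x$, so bounding it by $c^x$ is both unmotivated and, if taken literally, would make the lemma vacuously loose rather than tight. Likewise, your appeal to Rule~\ref{rule4} to argue ``at most one in-progress update at arrival'' is addressing the wrong phenomenon: the synchronization cost here is tied to execution-round alignment, not to serialization of commits. Once you replace your interpretation with the barrier semantics of Rule~\ref{rule6}, the $c^x$ bound is simply the remaining length of the head task's in-flight execution at the moment the new request arrives, and the ``at most once'' part follows because queue entry happens once per request.
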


\begin{proof}
A resource request incurs synchronization overhead when it joins the FIFO queue and waits for the head task to complete its current execution. Since the maximum execution length is $c^x$ and synchronization occurs at most once, the maximum synchronization overhead is $c^x$.
\end{proof}

\begin{lemma}\label{lemma:onesync}
The synchronisation overhead is the only additional waiting time.
\end{lemma}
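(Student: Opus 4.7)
The plan is to show that once a request has cleared its synchronization period, every remaining unit of time it spends on the resource can be classified either as one of its own critical-section executions (successful or fault-induced) or as a data-induced re-execution already accounted for by Lemma~\ref{lemma:outdated}, leaving no residual waiting term beyond the synchronization bound of Lemma~\ref{lemma:overhead}.

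First, I would enumerate the events that can consume time during a single request of $\tau_i$ to $r^x$ according to the rules of LEFT-RS: (i) the synchronization period when $\tau_i$ joins the FIFO queue (Rule~\ref{rule6}); (ii) $\tau_i$'s own critical-section executions, including the $f^x_i$ fault-induced retries and the final successful attempt (Rule~\ref{rule5}); (iii) data-induced re-executions triggered whenever a preceding task in the queue successfully updates $r^x$ (Rules~\ref{rule4} and~\ref{rule7}). The argument then proceeds by showing that no other mechanism in the protocol can stall $\tau_i$.

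The key observation I would rely on is that LEFT-RS is lock-free: after the synchronization period, $\tau_i$ reads the current value of $r^x$ and executes its critical section locally and in parallel with every other task in the FIFO queue, with no mutual-exclusion region held by anyone. Consequently, the only way another task can influence $\tau_i$'s timeline post-synchronization is by performing a successful atomic update of $r^x$ — but any such event is already charged to category (iii) above and bounded by Lemma~\ref{lemma:outdated}. In particular, remote faults do not propagate into waiting for $\tau_i$, since by Rule~\ref{rule5} each task handles its own faults independently, and by Rule~\ref{rule7} later arrivals only induce re-execution on $\tau_i$ when $\tau_i$ itself has faulted, which is already folded into the $f^x_i$ count in category (ii).

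The main obstacle I anticipate is being airtight about ``hidden'' waiting sources, most notably contention on the atomic update primitive and the corner case where $\tau_i$ joins an empty or near-empty queue. For the atomic update, I would argue that the commit is instantaneous at the granularity of $c^x$ and that any losing attempt is simply converted into a data-induced re-execution rather than a wait. For the queue-entry corner case, Lemma~\ref{lemma:overhead} already subsumes both the empty-queue scenario (zero overhead) and the in-progress-head scenario (at most $c^x$). With these cases handled, the decomposition is exhaustive and the lemma follows immediately by contradiction: any additional waiting time would have to be attributable to some protocol event outside categories (i)–(iii), which the rules of LEFT-RS do not permit.
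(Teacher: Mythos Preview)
Your decomposition into (i) synchronization, (ii) own executions, and (iii) data-induced re-executions is clean, but it overlooks the explicit waiting periods that the protocol itself mandates \emph{after} the synchronization point. Rule~\ref{rule7} requires a task that has executed its critical section correctly to wait for the preceding tasks in the queue to finish before it may proceed, and Rule~\ref{rule8} requires a task that detected a fault to wait for all other tasks to finish their current round before it re-executes. These are genuine stalls that occur post-synchronization and are \emph{not} triggered by a successful update of $r^x$, so your assertion that ``the only way another task can influence $\tau_i$'s timeline post-synchronization is by performing a successful atomic update of $r^x$'' is too strong. The protocol is round-based rather than fully asynchronous, and rounds impose waits that your case analysis never names.

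The paper closes exactly this gap, but by a different mechanism than the one you sketch. It does not claim that no post-synchronization waiting exists; instead it shows that any such wait is \emph{absorbed} into the $c^x$ already charged to that round. Because Rule~\ref{rule6} forces all queued tasks to start the current round simultaneously, a task that finishes early and then waits (per Rule~\ref{rule7} or Rule~\ref{rule8}) still satisfies (execution time $+$ wait) $\leq c^x$, so each round costs exactly one unit of $c^x$ and the intra-round wait does not surface as a separate additive term. Your proposal needs this simultaneity argument to be sound; without it, the contradiction in your final paragraph fails, since a protocol event outside categories (i)--(iii) --- intra-round waiting mandated by Rules~\ref{rule7} and~\ref{rule8} --- does exist.
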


\begin{proof}
According to Rule \ref{rule7}, a task may need to wait for the preceding tasks to finish when executed correctly. However, Rule \ref{rule6} forces all tasks to start simultaneously; therefore, the waiting period only occurs when the task finishes earlier, and the sum of its execution and waiting time will not exceed $c^x$. The same reason also applies to Rule \ref{rule8}, when a task incurs faults, it waits for all tasks to finish before carrying out re-execution.
\end{proof}

% However, according to Rules \ref{rule4} and \ref{rule5} we can deduce a later task in the FIFO queue can make a successful submission only if its preceding tasks failed the execution. Therefore, for $\tau_i$ its outdated re-execution caused by later tasks in the FIFO queue can only happen when $\tau_i$ incurs faults. Assuming there are at most $m$ tasks executing on remote cores that will access $r^x$ we have the following lemma:

According to Lemmas \ref{lemma:outdated}, \ref{lemma:overhead} and \ref{lemma:onesync}, we can have the following corollary. In Corollary~\ref{cor:proposed}, $n^x_i$ is the fundamental worst-case execution of $\tau_i's$ request without accounting for the delay from remote cores. The $m$ accounts for data-induced re-executions as demonstrated in Lemma  \ref{lemma:outdated}. The 1 unit of $c^x$ is the synchronization overhead as illustrated in Lemma \ref{lemma:overhead}.

\begin{corollary}\label{cor:proposed} 
For a given resource request to $r^x$ from $\tau_i$ under LEFT-RS, its worst-case resource accessing time can be upper bounded as $E^x_i=(n^x_i+m+1) \cdot c^x$.
\end{corollary}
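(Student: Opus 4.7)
The plan is to combine the three preceding lemmas into a single worst-case bound by decomposing $\tau_i$'s resource-access time into three disjoint sources, each contributing an integer number of critical-section units of length $c^x$. First I would isolate the three sources: (i) the attempts $\tau_i$ performs on its own critical section until it submits a fault-free result, (ii) additional re-executions triggered by successful updates from tasks ahead of $\tau_i$ in the FIFO queue (data-induced re-executions), and (iii) the one-shot synchronization wait when $\tau_i$ joins an in-progress execution round. Because all tasks in a round execute concurrently and each round lasts at most $c^x$, the total access time can be written as the number of $c^x$-length rounds charged to $\tau_i$'s request, multiplied by $c^x$.

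Next I would count each source by invoking the lemmas directly. Source (i) contributes $n^x_i$ rounds: by Rule 5 each fault of $\tau_i$ triggers an independent re-execution, and a request terminates only after one successful submission, so in the worst case exactly $n^x_i$ attempts occur. Source (ii) contributes at most $m$ rounds by Lemma~\ref{lemma:outdated}, since each preceding task in the FIFO queue can cause at most one data-induced re-execution upon its single update, and there are at most $m$ preceding remote requests. Source (iii) contributes at most one $c^x$-length round by Lemma~\ref{lemma:overhead}. Summing gives $E^x_i \le (n^x_i + m + 1)\cdot c^x$, matching the claimed bound.

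The main obstacle is arguing that no further delay is hidden outside these three sources, and in particular that correctly executing tasks ahead of $\tau_i$ do not impose extra per-round waits. This is exactly what Lemma~\ref{lemma:onesync} establishes: Rule 6 synchronises all participants at the start of each round so early finishers never extend it, and the waits implied by Rule 7 and Rule 8 occur only when $\tau_i$ itself faults and are therefore already absorbed into the $n^x_i$ attempts of source (i). Once this non-overlap argument is made explicit, the three additive bounds immediately yield Corollary~\ref{cor:proposed}.
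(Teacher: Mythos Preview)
Your proposal is correct and matches the paper's own argument essentially line for line: the paper likewise attributes the $n^x_i$ term to $\tau_i$'s own fault-induced executions, the $m$ term to data-induced re-executions via Lemma~\ref{lemma:outdated}, the $+1$ term to the synchronisation overhead via Lemma~\ref{lemma:overhead}, and invokes Lemma~\ref{lemma:onesync} to rule out any further waiting. Your explicit framing in terms of $c^x$-length rounds and the non-overlap argument is slightly more detailed than the paper's terse justification, but the structure and ingredients are identical.
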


% However, the synchronization overhead may not happen in some scenarios according to Rule \ref{rule8}.

% We propose the following lemma. 

% \begin{lemma}\label{lemma:synfree}
%     If all preceding tasks requests in the FIFO queue of $\tau_i$ has $n^x_i= 1$, then the request of  $\tau_i$ does not need to enter the synchronization period.
% \end{lemma}

% \begin{proof}

% As mentioned, synchronization overhead occurs when a task finishes correctly, but preceding tasks in the FIFO queue have not. However, since \( n^x_i=1 \), by the time \(\tau_i\) finishes, all preceding tasks must have finished, as they execute only once and start no later than \(\tau_i\) (FIFO rule). Thus, \(\tau_i\) incurs no synchronization overhead.

% \end{proof}

According to Lemma~\ref{lemma:synfree}, the synchronization overhead may not happen when preceding tasks of $\tau_i$ in the FIFO queue do not encounter faults, and we can have the following corollary.

\begin{corollary}\label{cor:MSPP} For a given resource request of $ \tau_i$ to $r^x$,
if preceding tasks in the FIFO queue incur no faults, the worst-case resource accessing time under the proposed approach will be $E^x_i =(n^x_i+m) \cdot c^x$.
\end{corollary}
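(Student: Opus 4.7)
The plan is to derive Corollary~\ref{cor:MSPP} as a direct refinement of Corollary~\ref{cor:proposed} by isolating precisely which of the three contributing terms ($n^x_i$, $m$, and the $+1$ synchronization unit) survive once the fault-freeness assumption on preceding tasks is imposed. I would first re-read the three-term breakdown of $E^x_i$ established in Corollary~\ref{cor:proposed}, then argue term by term which ones are still tight and which collapse.

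First, I would invoke Lemma~\ref{lemma:synfree} (referenced in the paragraph immediately preceding the corollary) to dispose of the synchronization unit. The lemma states exactly what is needed: if none of the preceding tasks in the FIFO queue incurs a fault, then $\tau_i$ pays no synchronization overhead when joining the queue, because the head task finishes its submission in lockstep with the other replicas and thereby removes the mid-section waiting window that Lemma~\ref{lemma:overhead} was bounding. Hence the $+1 \cdot c^x$ term drops out.

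Second, I would verify that the remaining two terms, $n^x_i$ and $m$, are still present and still tight. The $n^x_i$ term survives unconditionally: by Rule~\ref{rule5}, $\tau_i$'s own failed executions $f^x_i$ (contained in $n^x_i$) are its private responsibility regardless of what preceding tasks do. The $m$ term also survives: by Lemma~\ref{lemma:outdated}, data-induced re-executions are triggered by successful resource updates from preceding tasks, and a preceding task performing its update (via Rule~\ref{rule4}) is exactly what a fault-free preceding execution looks like. In fact, fault-freeness of preceding tasks guarantees that all $m$ of them do update the resource, so the worst case for $m$ data-induced re-executions remains achievable under the hypothesis.

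The main obstacle I expect is making the second point fully airtight: one must be sure that removing the synchronization unit does not simultaneously reduce the achievable count of data-induced re-executions, i.e., that $m$ and the $+1$ term are independent contributions rather than coupled. I would address this by pointing out that the synchronization overhead of Lemma~\ref{lemma:overhead} arises from \emph{joining} the queue when the head is mid-execution, whereas data-induced re-executions of Lemma~\ref{lemma:outdated} arise from \emph{updates while $\tau_i$ is already in the queue}; the two events occur at disjoint points in $\tau_i$'s lifecycle and are governed by different rules, so dropping the former leaves the bound on the latter intact. Combining the three observations yields $E^x_i = (n^x_i + m) \cdot c^x$ as claimed.
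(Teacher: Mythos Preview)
Your proposal is correct and matches the paper's own reasoning. The paper does not give an explicit proof of this corollary; it simply states that by Lemma~\ref{lemma:synfree} the synchronization overhead vanishes when preceding tasks are fault-free, so the $+1$ term in Corollary~\ref{cor:proposed} drops out---exactly the argument you spell out, only you are more careful in verifying that the remaining $n^x_i$ and $m$ terms are unaffected.
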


% The proposed approach introduces a new mechanism in managing fault tolerance in global resource sharing. The performance difference between the proposed approach and MSRPFT can be analysed by comparing their worst-case resource accessing time for a resource request from a given task $ \tau_i$.

As shown in Equation~\eqref{eq:OneAccessE}, the worst-case resource accessing time for a given request from $\tau_i$ under MSRP-FT is 
$E^x_i \geq ( n^x_i+m) \cdot c^x$,
which indicates that it can be better (i.e., smaller) than the proposed bound 
$
E^x_i = ( n^x_i+m + 1) \cdot c^x
$
in limited scenarios.
More specifically, according to Corollary \ref{coro:MSRP-FT} and \ref{cor:proposed}, we can have the following corollary.

\begin{corollary}\label{cor:better_performance} 
For a request from $\tau_i$ to $r^x$, 
LEFT-RS outperforms MSRP-FT as the number of faults increases. More specifically, this holds when more than one remote request satisfies $\lceil n^x_j/k_j \rceil > 1$. Unlike MSRP-FT, LEFT-RS maintains a fixed number of remote delay units at $m$, ensuring that its remote blocking does not grow with the fault count.
\end{corollary}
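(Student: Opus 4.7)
The plan is to prove Corollary~\ref{cor:better_performance} by directly comparing the two worst-case bounds already established for the single-request access time and showing that, under the stated condition, the LEFT-RS bound is strictly smaller, and that the gap widens monotonically as the fault count grows.

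First, I would restate the two bounds side by side: from Equation~\eqref{eq:OneAccessE} together with Corollary~\ref{coro:MSRP-FT}, the MSRP-FT worst-case access time for $\tau_i$ is $E^x_i = (n^x_i + S^x_i)\cdot c^x$ with $S^x_i \geq m$ always, and $S^x_i > m+1$ (hence $S^x_i \geq m+2$, since $S^x_i$ is an integer number of $c^x$ units) whenever more than one remote request satisfies $\lceil n^x_j/k_j\rceil > 1$. From Corollary~\ref{cor:proposed}, the LEFT-RS bound is $E^x_i = (n^x_i + m + 1)\cdot c^x$, regardless of how many remote requests trigger re-executions. Subtracting the two bounds under the stated hypothesis immediately yields a strict inequality in favour of LEFT-RS, by at least one $c^x$ unit.

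Next, to justify the qualitative claim that LEFT-RS outperforms MSRP-FT \emph{as the fault count increases}, I would argue the monotonicity of $S^x_i$ in the per-request fault counts $n^x_j$ of remote tasks. Because $\lceil n^x_j/k_j\rceil$ is non-decreasing in $n^x_j$, additional faults on a remote request can only increase its contribution to $S^x_i$; once two or more remote requests cross the threshold $\lceil n^x_j/k_j\rceil > 1$, Corollary~\ref{coro:MSRP-FT} forces $S^x_i$ to grow strictly beyond $m+1$, and each further increment of any $\lceil n^x_j/k_j\rceil$ adds at least one more unit of $c^x$ to the MSRP-FT bound. In contrast, the remote contribution in Corollary~\ref{cor:proposed} is capped at $m+1$ units of $c^x$ by Lemmas~\ref{lemma:outdated} and~\ref{lemma:overhead}, both of which depend only on the number $m$ of concurrent remote requests and not on their individual fault counts. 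Hence the MSRP-FT bound diverges from the LEFT-RS bound as faults accumulate.

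The main obstacle I expect is making the phrase ``outperforms as the number of faults increases'' precise enough to be a rigorous claim rather than an informal observation. My plan is to handle this by restating it as two concrete inequalities: (i) under the stated threshold hypothesis, $E^x_i$ under MSRP-FT strictly exceeds $E^x_i$ under LEFT-RS; and (ii) the MSRP-FT bound is non-decreasing in every $n^x_j$ of remote tasks, whereas the LEFT-RS bound is invariant in those quantities, which together give the desired widening gap. A secondary subtlety is ensuring the LEFT-RS bound of $n^x_i + m + 1$ is used uniformly, even though Corollary~\ref{cor:MSPP} shows the ``$+1$'' synchronisation term can vanish when preceding queued tasks are fault-free; I would simply note that using the pessimistic bound only weakens the inequality in LEFT-RS's favour, so the conclusion still holds.
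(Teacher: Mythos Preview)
Your proposal is correct and follows exactly the route the paper intends: the paper states this corollary immediately after Corollaries~\ref{coro:MSRP-FT} and~\ref{cor:proposed} with no separate proof, simply remarking that it follows from those two results. Your argument---subtracting the two bounds, invoking the integrality of $S^x_i$ to sharpen $S^x_i>m+1$ to $S^x_i\geq m+2$, and noting the monotonicity of $\lceil n^x_j/k_j\rceil$ in $n^x_j$ versus the fault-independence of the LEFT-RS remote contribution---is precisely the content the paper leaves implicit, so there is nothing materially different to compare.
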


To visualize the effectiveness of the proposed approach, we illustrate the following example with the same setup of Example \ref{example:MSRPFT} which illustrates the worst-case resource accessing time of $\tau_2$ as well.

\begin{example}\label{example:LEFT-RS}
As shown in Figure \ref{fig:LEFTRSCOMPARE}, $\tau_1$ with $n^x_1 = 6$ executes on $\lambda_1$, and $\tau_2$ with $n^x_2 = 1$ executes on $\lambda_2$. At time $t = 1$, both tasks simultaneously request access to $r^x$. 
Under LEFT-RS, both tasks begin executing their own critical sections concurrently (Rule \ref{rule1}). They both complete successfully at $t = 2$, with $\tau_1$ updates $r^x$, while $\tau_2$ is blocked (Rule \ref{rule3}). This represents the worst case for $\tau_2$, as it would have updated  $r^x$ if $\tau_1$ had failed. At $t = 2$, $\tau_2$ re-reads $r^x$ and re-executes (Rule \ref{rule4}). It completes accessing $r^x$ at $t = 3$ and finishes at $t = 4$, which is earlier than the $t = 6$ completion in Example \ref{example:MSRPFT}.
\end{example}

\begin{figure}[t]
\centering
\includegraphics[width=0.8\columnwidth]{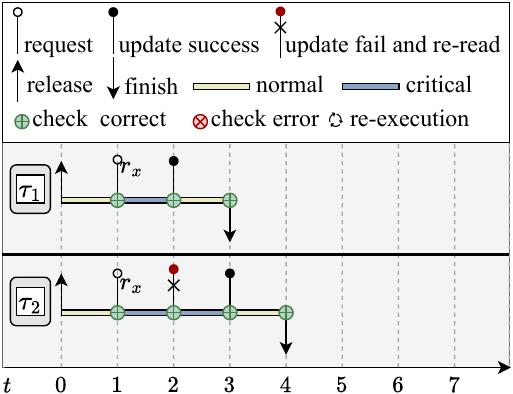}
\caption{Worst-case resource accessing of $\tau_2$ with LEFT-RS}
\label{fig:LEFTRSCOMPARE}
\vspace{-1em}
\end{figure}

In addition, the  Corollary \ref{cor:better_performance} and Example \ref{example:LEFT-RS} are based on the theoretical worst-case response time analysis of MSRP-FT as introduced in \cite{chen2022msrp}. The unavoidable overhead associated with MSRP-FT, as described in Equation~\eqref{eq:MSRP-FT_overhead}, has not yet been taken into account.
The overall performance comparisons will be demonstrated in Section~\ref{sec:evaluation}.

\section{Schedulability Analysis} 
\label{sec:analysis}

In this section, we derive the worst-case response time analysis of LEFT-RS. The notations specific to this analysis are listed in Table \ref{tab:schedulability_notation}, while general notations reused from earlier sections can be found in Table \ref{tab:notation}.

As shown in Equation \eqref{eq:Respnse}, $R_i$ represents the worst-case response time of a given task $\tau_i$. $C_i$ represents the pure WCET of $\tau_i$ without accessing any shared resource. $E_i$ is the total resource-accessing time of $\tau_i$ and local tasks with higher priority than $\tau_i$ ($lhp(i)$). $B_i$ is the arrival blocking incurred by $\tau_i$ upon arrival.
$F_i$ denotes the amount of time $\tau_i$ spends tolerating faults.
The term $\left \lceil R_i/T_h \right \rceil$ denotes the maximum number of times a local high-priority task $\tau_h\in lhp(i)$ can preempt $\tau_i$ during $R_i$. $C_h$ and $F_{h}$ are the pure WCET and fault-tolerance time of $\tau_h$ per release instance, respectively. 
% The resource-accessing time of $\tau_h$ is accounted for in $E_i$.
% where $\sum_{\tau_h \in \textbf{lhp}(i)} \left \lceil \frac{R_i}{T_h} \right \rceil \cdot F_{h}$ represents the total fault-tolerance time of tasks in $lhp(i)$ during the period of $R_i$.
\begin{equation} \label{eq:Respnse}
\begin{split}
R_i =  
C_{i} + E_{i} + B_{i}+F_i+\sum_{\tau_h \in {lhp}(i)} \left \lceil \frac{R_i}{T_h}\right \rceil \cdot (C_{h}+ F_{h})
\end{split}
\end{equation}

We first introduce how to bound $E_i$
% by showing the worst-case amount of time that $\tau_i$ and $lhp(i)$ can spend due to requesting shared resources. This part also 
which includes the resource-accessing time of  $\tau_i$ and tasks in $lhp(i)$, as in the worst-case tasks in $lhp(i)$
preempt $\tau_i$ and finish their resource accessing during preemption which transitively delays $\tau_i$ \cite{zhao2017new}. As shown in Equation \eqref{eq:Ei},  $E_i$ iterates over all the shared resources in the system ($\mathcal{R}$). For a given resource $r^x$,
$ N^x_{i,local}$ represents the total number of requests from $\tau_i$ and tasks in $lhp(i)$ to $r^x$ and they are called local requests. 
We made an assumption here that every local request has an execution $n^x_i=1$ because the worst-case fault-tolerance time is accounted in $F_i$ and $F_h$.
$ N^x_{i,local}$ is further expanded in Equation \eqref{eq:Nlocal}.
$|\mathcal{E}^x_i|$ gives the total number of remote blocking requests incurred by local requests which is illustrated through steps in Equations \eqref{eq:S}, \eqref{eq:xii} and \eqref{eq:Nmax}. $ Syn^x_i$ is the synchronisation overhead experienced by local requests as explained in Equation \eqref{eq:synOverhead}. 

\begin{equation}\label{eq:Ei}
E_i = \sum_{r^x \in \mathcal{R}}( N^x_{i,local}+ |\mathcal{E}^x_i|+ Syn^x_i)\cdot c^x
\end{equation}

As shown in Equation \eqref{eq:Nlocal},
$N^x_{i,local}$ includes the total number of requests from $\tau_i$ to $r^x$ ($N^x_i$). During the release time of $\tau_i$ (denoted by $R_i$), $\tau_i$ may be preempted by tasks in $lhp(i)$, it additionally accounts for the requests from all $\tau_h \in lhp(i)$ by including the corresponding $N^x_h$ of each $\tau_h$.

% They are formulated together because the worst-case analysis assumes that $lhp(i)$ preempts $\tau_i$ and requests $r^x$, with the time spent being added to the execution of $\tau_i$ \cite{chen2022msrp}.

\begin{equation}\label{eq:Nlocal}
N^x_{i,local} = N^x_{i} + \sum_{\tau_h \in {lhp}(i)} \left \lceil \frac{R_i}{T_h}\right \rceil \cdot N^x_{h}
\normalsize
\end{equation}

\begin{table}[t]
    \centering
    \renewcommand{\arraystretch}{1.1} % Adjust row spacing
    \setlength{\tabcolsep}{8pt} % Adjust column spacing
    \vspace{1em}
    \caption{Table of Notation for Schedulability Analysis}
    \begin{tabular}{l  p{5.8cm}}
        \hline
        \textbf{Notations} & \textbf{Descriptions} \\
        \hline
        % --- Response time and blocking ---
        $R_i$ & Worst-case response time of $\tau_i$ \\
        $E_i$ & Total resource-accessing time of $\tau_i$ and $lhp(i)$ \\
        $B_i$ & Arrival blocking incurred by $\tau_i$ \\
        $F_i$, $F_h$ & Fault-tolerance time for $\tau_i$ and $\tau_h \in lhp(i)$ \\
        \hline
        $lhp(i)$, $llp(i)$ & Set of higher/lower-priority tasks relative to $\tau_i$ on the same core\\
        $\lambda(\tau_i)$ & Core where $\tau_i$ is allocated \\
        $\Gamma(\lambda_k)$ & Tasks allocated to core $\lambda_k$ \\
        \hline
        % --- Resource requests ---
        $N^x_{i,local}$ & Number of requests issued by $\tau_i$ and $lhp(i)$ to $r^x$ \\
        $\eta^x_{i}(R_i, R_j)$ & List of $\tau_j$'s requests to $r^x$ over time periods $R_i$ and $R_j$ \\
        $\xi^x_{i,\lambda_k}$ & Remote requests ($n^x_j$) that can block $\tau_i$ and $lhp(i)$ from $\lambda_k$ \\
        $\mathcal{N}^x_{i,\lambda_k}$ & Maximum number of remote requests blocking $\tau_i$ and $lhp(i)$ from $\lambda_k$ \\
        $\mathcal{E}^x_i$ & Total list of $n^x_j$ of remote requests that can block $\tau_i$ and $lhp(i)$ \\
        \hline
        % --- Synchronization and related overhead ---
        $Syn^x_i$ & Synchronisation overhead incurred by local requests to $r^x$\\
        $\alpha^x_{i}$ & Largest $n^x_j$ from $llp(i)$ for $r^x$ \\
        $\beta^x_{i}$ & The list of remote $n^x_j$ that can block $\alpha^x_{i}$ \\
         \( \mathbb{I}(\cdot) \)  &  An indicator function that returns 1 if the condition inside holds true and 0 otherwise\\
        $F^A(i)$ & Resources imposing arrival blocking on $\tau_i$ \\
        $F(\tau_i)$  & Resources accessed by $\tau_i$\\
        \hline
    \end{tabular}
    \vspace{-4pt}
    \label{tab:schedulability_notation}
\end{table}

$|\mathcal{E}^x_i|$ denotes the size of $\mathcal{E}^x_i$ which is worst-case list of remote requests in terms of $n^x_j$ that can block local requests.   In Equation \eqref{eq:S}, $\lambda(\tau_i)$ denotes the core where $\tau_i$ is allocated.
$\mathcal{E}^x_i$ iterates over all the remote cores of $\tau_i$ (i.e., $\lambda_k \in \Lambda,\lambda_k \neq \lambda(\tau_i)$ )  and takes the first $\mathcal{N}^x_{i,\lambda_k}$   requests from $\xi^x_{i, \lambda_k}$.

\begin{equation}\label{eq:S}
\mathcal{E}^x_i = \bigcup_{\lambda_k \in \Lambda,\lambda_k \neq \lambda(\tau_i)} \bigcup_{p=1}^{\mathcal{N}^x_{i,\lambda_k}} \xi^x_{i, \lambda_k}(p) 
\end{equation}

$\xi^x_{i, \lambda_k}$ as shown in Equation~\eqref{eq:xii} represents a non-increasing list of total requests in terms of $n^x_j$  from
a given remote core $\lambda_k$.
It iterates over tasks on $\lambda_k$ (denoted as $\Gamma(\lambda_k)$) and takes $n^x_j$ of each request to $r^x$. The
notation  $\eta^x_{j} (R_i, R_j)$ denotes the list of $n^x_{j}$ of requests from a task $\tau_j$ executing on a remote core $\lambda_k$ during the period $R_i$ and $ R_j$, which accounts for the back-to-back hit~\cite{zhao2017new}.
It follows that  
$|\eta^x_{j}(R_i,R_j)| = \left \lceil  (R_i+R_j)/T_{j} \right \rceil \cdot N^x_j$
where $|~\cdot~|$ denotes the size of the list (i.e., the number of requests), $\left \lceil  (R_i+R_j)/T_{j} \right \rceil$ represents the number of times $\tau_j$ is released, and $N^x_j$ is the number of times $\tau_j$ accesses $r^x$ in a single release.   For instance,  
$\eta^x_{j}(R_i,R_j) = \{2,2,2\}$ indicates that there are three requests  with $N^x_j=3$ and  all requests have execution numbers $n^x_{j}=2$.
% The notation $\xi^x_{i,\lambda_k}$ which is a non-increasing list that accounts for all the requests to $r^x$  from tasks on a remote core $\lambda_k$ (where $\lambda_k \neq \lambda(\tau_i)$), these requests can potentially block (i.e., become preceding tasks in the FIFO queue)
% local requests when requesting $r^x$.

\begin{equation}\label{eq:xii}
\xi^x_{i,\lambda_k} =  \bigcup_{\tau_j \in \Gamma(\lambda_k),\lambda_k \neq \lambda(\tau_i)}  \eta^x_{j} (R_i, R_j)    
\end{equation}

As each core can manage a request at a time, each local request can have at most one request from a remote core to place ahead of it in the FIFO queue \cite{zhao2017new}. Therefore the maximum of requests from $\lambda_k$ that can block local requests is denoted as $\mathcal{N}^x_{i,\lambda_k} $ as shown in Equation \eqref{eq:Nmax} which equals to the minimum number between local requests and the number of remote requests on a given remote core $\lambda_k$. As $\xi^x_{i, \lambda_k}$ is a non-increasing list, the first  $\mathcal{N}^x_{i,\lambda_k}$ items are the biggest $n^x_j$, which can result in the worst-case synchronisation overhead according to Lemma \ref{lemma:synfree}.

\begin{equation}\label{eq:Nmax}
\mathcal{N}^x_{i,\lambda_k} =Min\{N^x_{i,local} ,|\xi^x_{i, \lambda_k}|\}
\end{equation}

% With $N^x_{i,\lambda_k}$ defined, we introduce $\mathcal{E}^x_{i,\lambda_k} $ as the list of requests from $\lambda_k$ that can delay $\tau_i$ and $lhp(i)$.
% As the list is non-increasing, hence all the elements taken are the largest $n^x_{j}$. 

% \begin{equation}\label{eq:xii}
% \mathcal{E}^x_{i,\lambda_k} =  \bigcup_{p=1}^{N^x_{i,max}} \xi^x_{i, \lambda_k}(p) 
% \end{equation}

According to Lemma \ref{lemma:synfree}, a request does not incur synchronisation overhead if all preceding requests have \( n^x_j = 1 \).  
The synchronisation overhead of local requests that must be accounted for is summarized in Equation \eqref{eq:synOverhead}, where \( \mathbb{I}(\cdot) \) is an indicator function that returns 1 if the condition inside holds true and 0 otherwise. The first term in the equation represents the total number of preceding requests with \( \mathcal{E}^x_i(p) > 1 \). The second term, \( N^x_{i,local} \), denotes the maximum possible synchronisation overhead as each local request can suffer at most one synchronisation overhead according to Lemma \ref{lemma:overhead}. The overall synchronisation overhead is then determined as the minimum of these two quantities.  

\begin{equation}\label{eq:synOverhead}
Syn^x_i = Min \left\{ \sum_{p=1}^{\left| \mathcal{E}^x_i\right|} \mathbb{I} \left( \mathcal{E}^x_i(p) > 1 \right), N^x_{i,local} \right\}
\end{equation}

% To determine which core in the system, excluding the local core $ \lambda(\tau_i) $, has the maximum number of ones, we compute:

% % \begin{equation}\label{eq:max_ones}
% % M^x_i=N^x_{i,local} -\left| \mathcal{E}^x_i\right|_1
% % \end{equation}

% \begin{equation}\label{eq:max_ones}
% M^x_i =
% \begin{cases}
% N^x_{i,local} - \left| \mathcal{E}^x_i\right|_1, & \text{if } N^x_{i,local} - \left| \mathcal{E}^x_i\right|_1 \geq 0 \\
% 0, & \text{otherwise}
% \end{cases}
% \end{equation}

The arrival blocking is denoted as $B_i$ as shown in Equation~\eqref{eq:blocking}. As the arrival blocking can only occur once upon the arrival of $\tau_i$ due to a local low-priority task $\tau_l\in llp(i)$ accessing a shared resource with a ceiling priority higher than the priority of $\tau_i$ or it is a global resource (non-preemptive).
The set of resources that can impose arrival blocking to  $\tau_i$ is denoted as $F^A(i)$.  Then $B_i$ iterates over all resources in $F^A(i)$ and finds out the maximum arrival blocking.

\begin{equation}\label{eq:blocking}
    B_i = \max_{r^x \in F^A(i)}
    \left\{
        (\alpha^x_{i} + |\beta^x_{i}| + \mathbb{I}(\exists  n^x_{j} \in \beta^x_{i},\ n^x_j > 1) )\cdot c^x
    \right\}
\end{equation}

The first term $\alpha^x_{i}$ in Equation \eqref{eq:blocking}  is expanded in Equation~\eqref{eq:alpha}, which denotes the largest $n^x_l$ of a request among requests of local low-priority tasks ($ \tau_l \in llp(i)$) to a given resource $r^x\in F^A(i)$ (expressed as $N^x_l > 0$).

\begin{equation}\label{eq:alpha}
    \alpha^x_{i} = \max \{ n^x_{l} | \tau_l \in llp(i) \wedge N^x_l >0 \}
\end{equation}

 The second term indicates the size of $\beta^x_{i}$, which denotes the list of remote blocking requests incurred by $\alpha^x_{i}$  shown in Equation \ref{eq:Bi}. $\beta^x_{i}$ iterates over all the remote cores ( $\lambda_k \in \Lambda, \lambda_k \neq \lambda(\tau_i)$) and takes the $\mathcal{N}^x_{i,\lambda_k} + 1$ request from $ \xi^x_{i, \lambda_k}$  if the length is satisfied ($\mathcal{N}^x_{i,\lambda_k}  + 1 \leq |\xi^x_{i, \lambda_k}|$).  This is because the first $\mathcal{N}^x_{i,\lambda_k}$ is already taken by $E_i$ through Equation \eqref{eq:S}. 

\begin{equation}\label{eq:Bi}
    \beta^x_{i} = \bigcup_{\lambda_k \in \Lambda, \lambda_k \neq \lambda(\tau_i)} 
    \left\{ \xi^x_{i, \lambda_k}(\mathcal{N}^x_{i,\lambda_k} + 1) \mid \mathcal{N}^x_{i,\lambda_k}  + 1 \leq |\xi^x_{i, \lambda_k}| \right\}
\end{equation}

Finally, $\mathbb{I}(\exists  n^x_{j} \in \beta^x_{i},\ n^x_j > 1)$ checks whether any request in $\beta^x_{i}$ has $n^x_j > 1$. If so, the function returns 1, indicating that an additional synchronisation overhead must be added as previously described; otherwise, it returns 0.

% \subsection{\textbf{Local fault-tolerance time} ($F^L_i$, $F^L_h$)}\label{sec:fault}

As the checkpoints divide tasks into segments and each task is assumed to incur at most $f_i$ faults during the execution of the single release, the worst-case fault-tolerance scenario therefore is when all $f_i$ faults happen on the longest segment of a task repeatedly  \cite{punnekkat2001analysis}.
Therefore, the worst-case fault-tolerance time for $\tau_i$ and $lhp(i)$ is denoted in Equation~\eqref{eq:F1}. It identifies the longest segment by taking the maximum value between the total execution time of normal sections $C_i$ and $c^x$, which is the maximum critical section among the resources accessed by $\tau_i$ denoted as $F(\tau_i)$.
The worst-case fault-tolerance time is calculated by  $f_i$  times the largest segment as each task re-executes by itself when encounters faults under LEFT-RS. 

\begin{equation} \label{eq:F1} 
\small F_i = f_{i}\cdot\max\{C_{i},\max\{c^x|r^x\in F(\tau_i)\}\}\ \normalsize
\end{equation}

\section{Evaluation} 
\label{sec:evaluation}

\begin{figure*}[t]
    % \centering
    % \includegraphics[width=0.6\linewidth]{images/legend.pdf}

    % First row (3 subfigures)
    \begin{subfigure}[b]{0.33\textwidth}
        \centering
        \includegraphics[width=\linewidth]{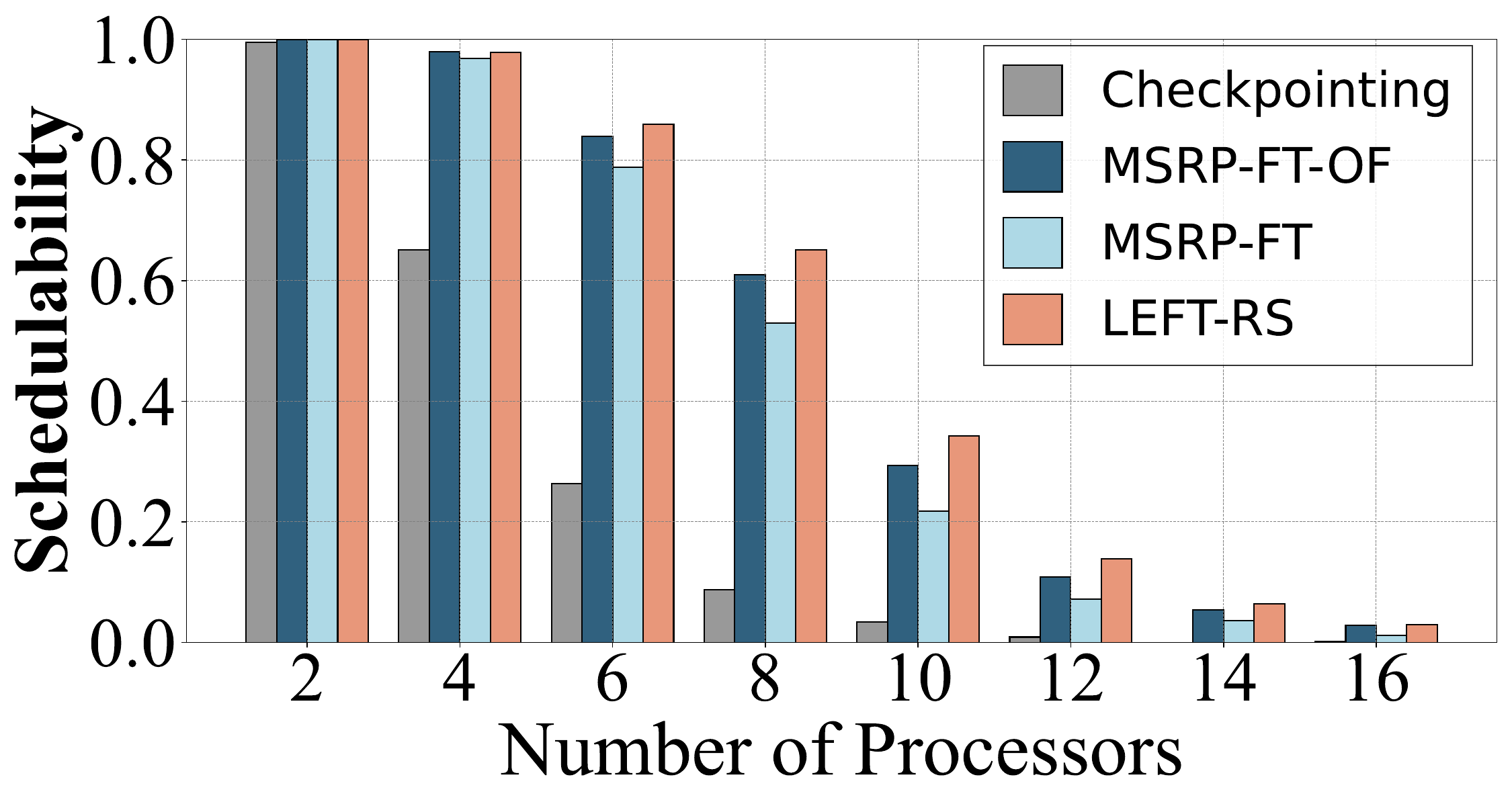}
        \caption{ Schedulability with varied $M$}
        \label{a3}
    \end{subfigure}
    \begin{subfigure}[b]{0.33\textwidth}
        \centering
        \includegraphics[width=\linewidth]{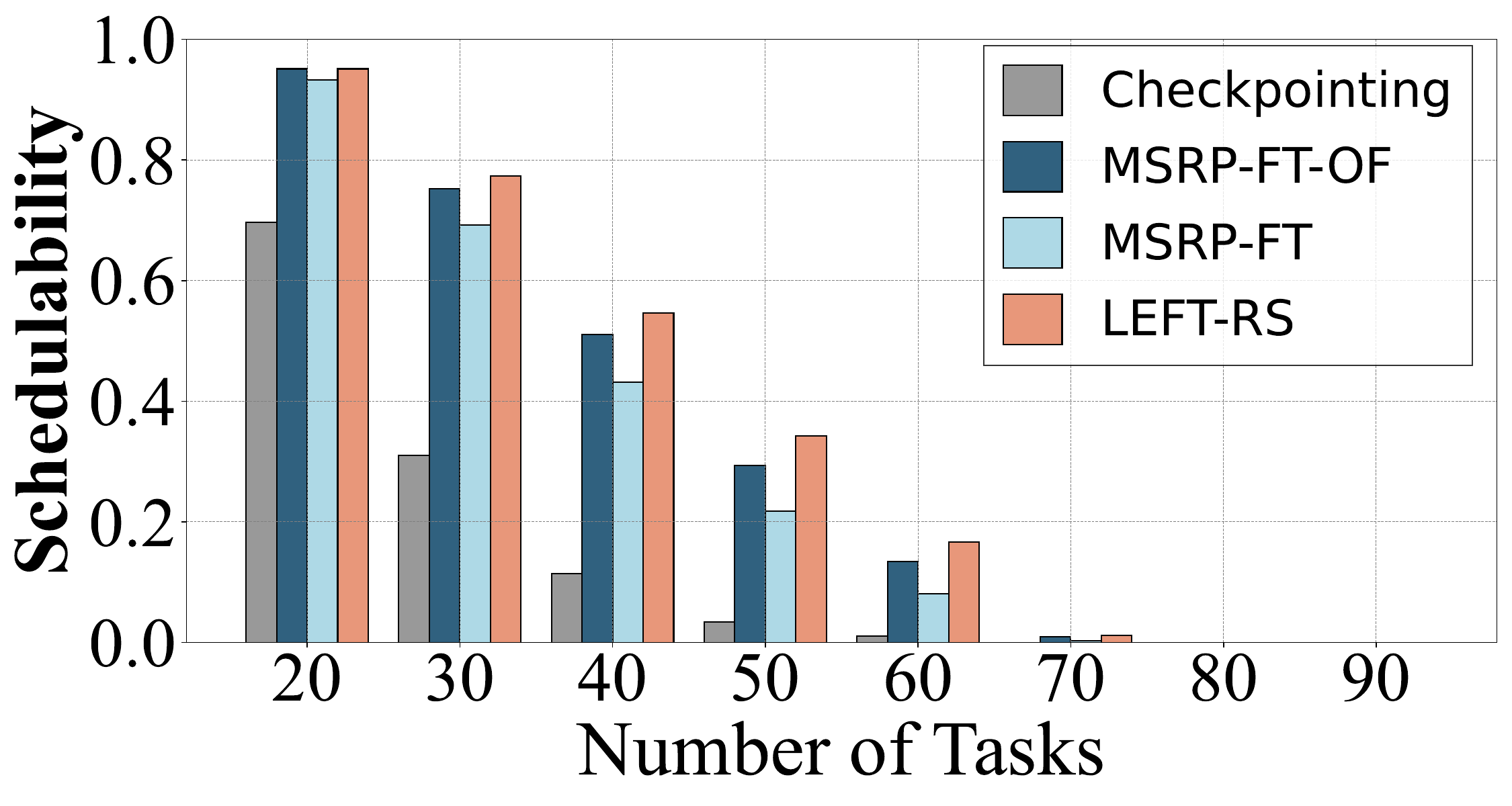}
        \caption{  Schedulability with varied $N\cdot M$}
        \label{b3}
    \end{subfigure}
    \begin{subfigure}[b]{0.33\textwidth}
        \centering
        \includegraphics[width=\linewidth]{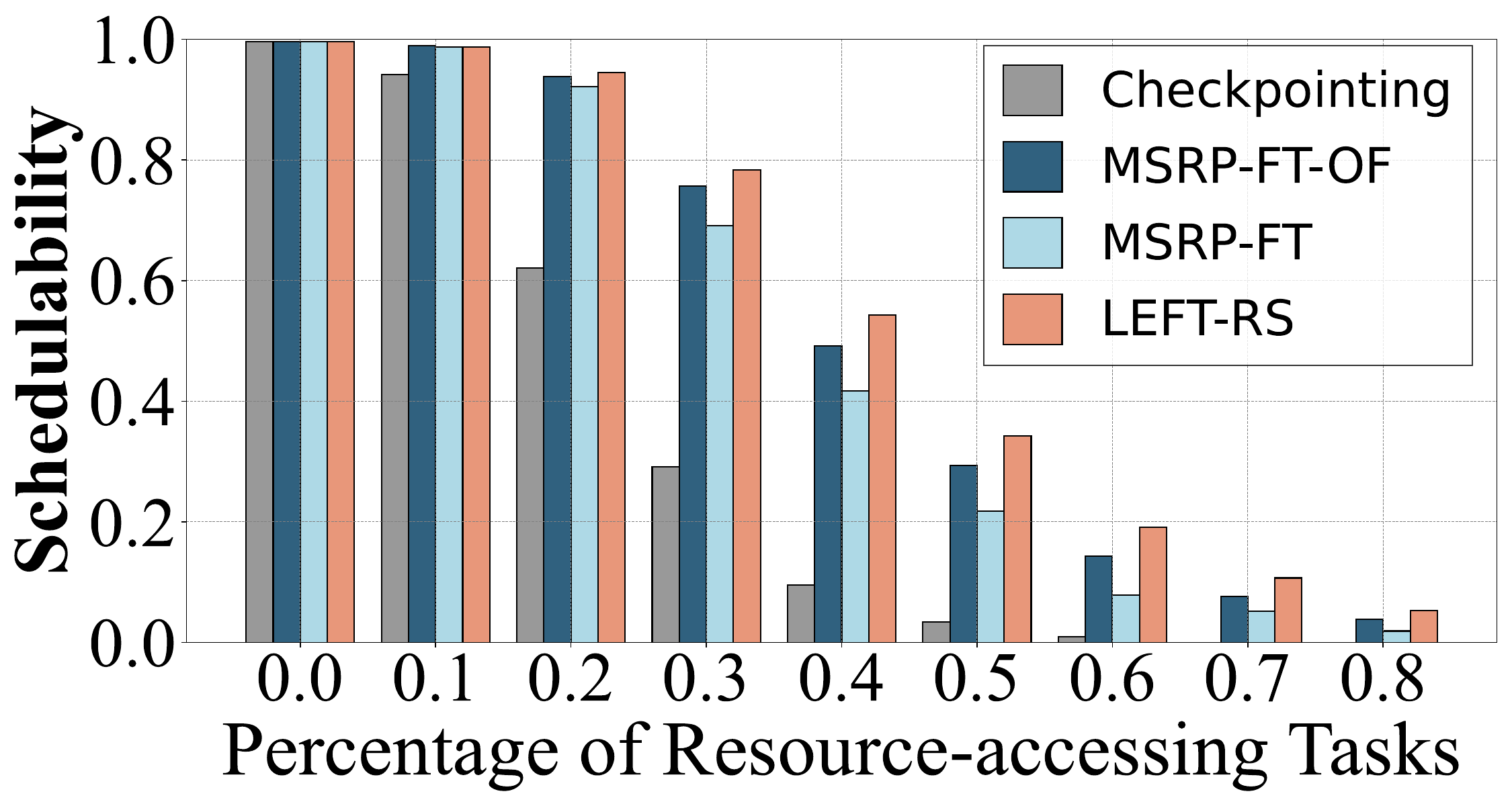}
        \caption{ Schedulability with varied $rsf$}
        \label{c3}
    \end{subfigure}

    \vskip\baselineskip  % vertical space between rows

    % Second row (3 subfigures)
    \begin{subfigure}[b]{0.33\textwidth}
        \centering
        \includegraphics[width=\linewidth]{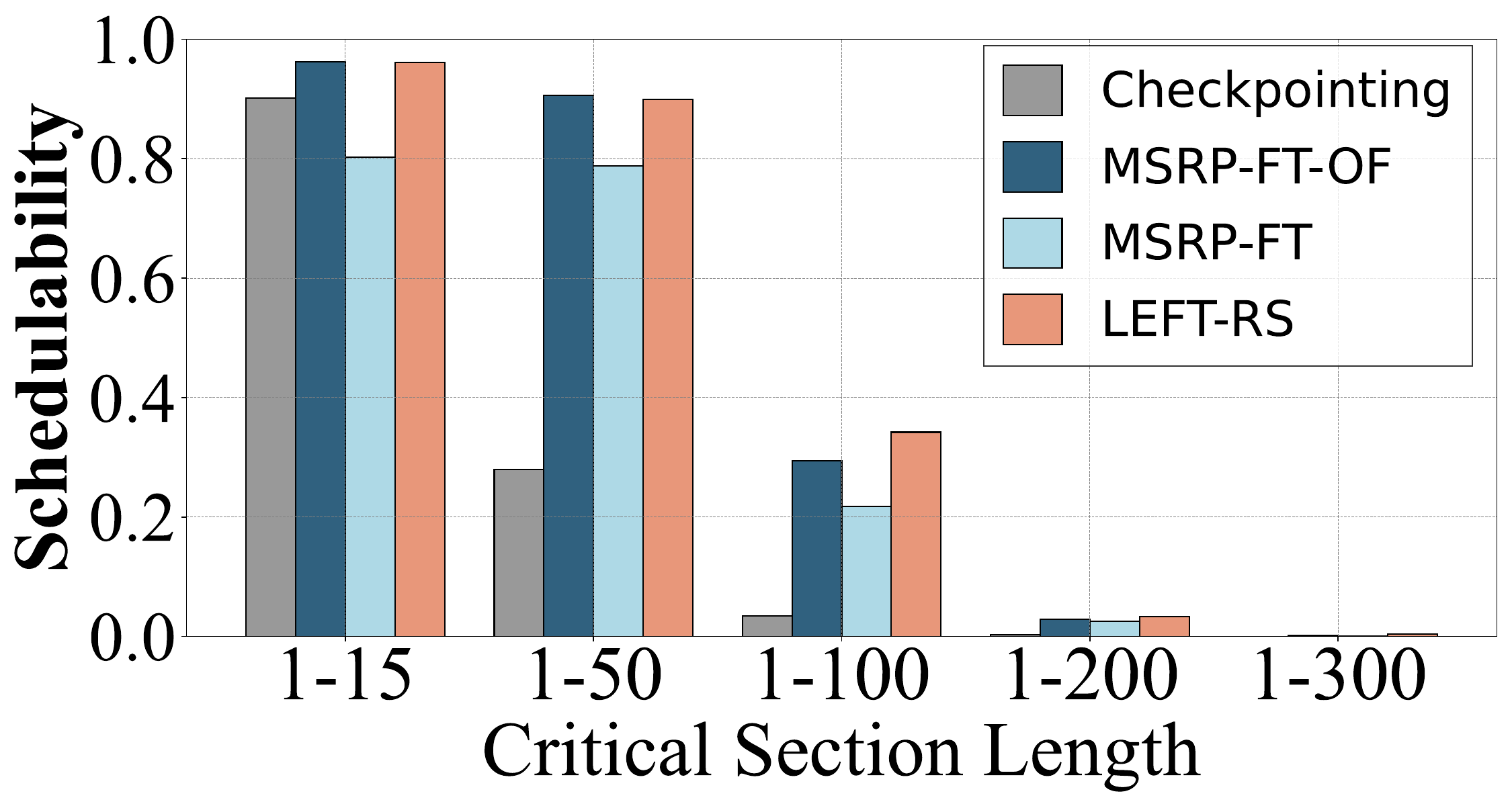}
        \caption{ Schedulability with varied $L$}
        \label{d3}
    \end{subfigure}
    \begin{subfigure}[b]{0.33\textwidth}
        \centering
        \includegraphics[width=\linewidth]{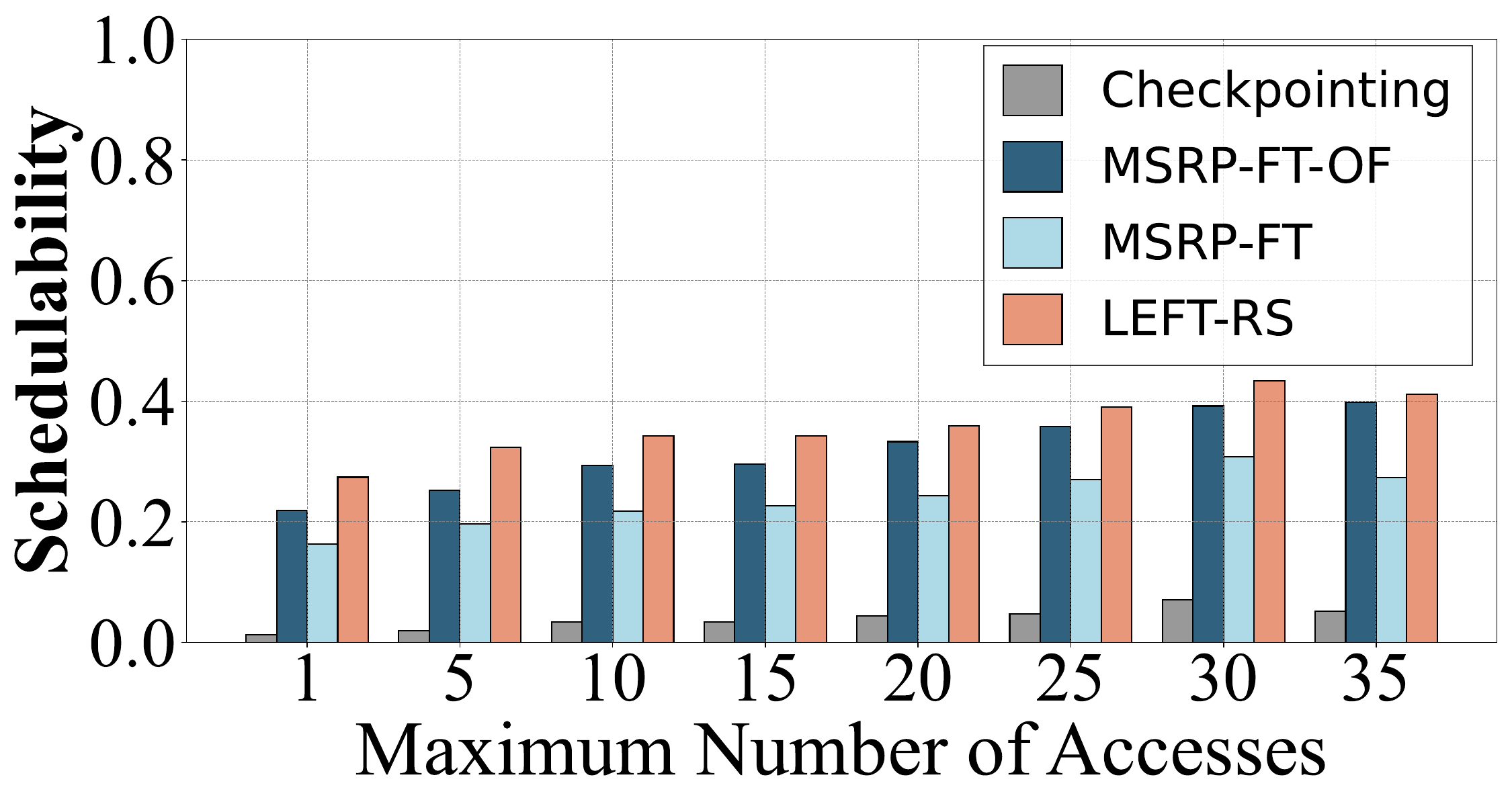}
        \caption{ Schedulability with varied $A$}
        \label{e3}
    \end{subfigure}
    \begin{subfigure}[b]{0.33\textwidth}
        \centering
        \includegraphics[width=\linewidth]{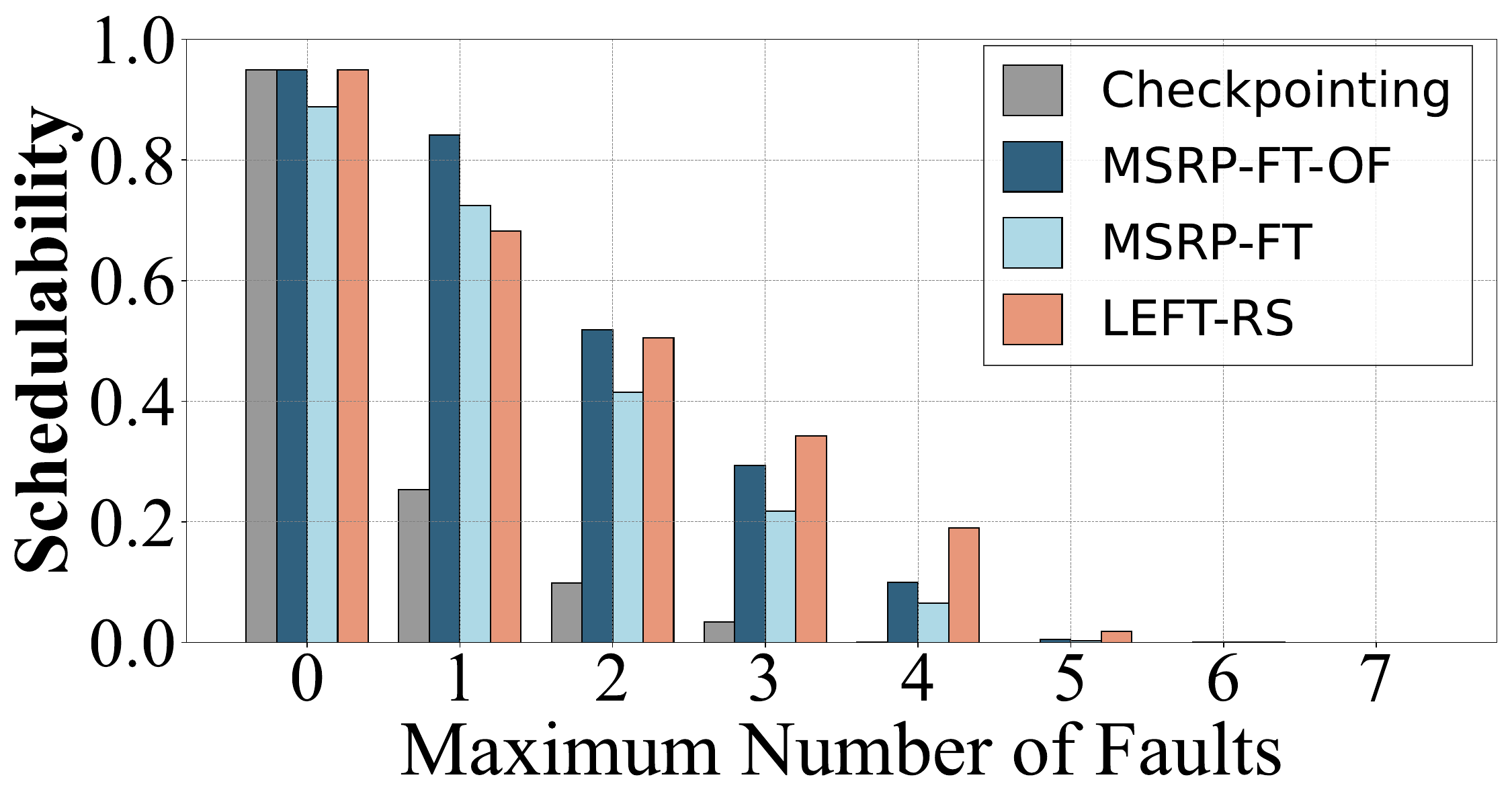}
        \caption{ Schedulability with varied $f$}
        \label{f3}
    \end{subfigure}

    \caption{System schedulability with $N=5$, $M=10$, $A=10$, $L =[1,100]$, $rsf=0.5$, $f=3$ and $K=M$ resources}
    \label{fig: schedulability}
    
\end{figure*}

In this section, we evaluate the effectiveness of the proposed approach by comparing the schedulability of systems under LEFT-RS, MSRP-FT, MSRP-FT-OF (Overhead Free), and Checkpointing. All methods adopt the same checkpoint placement strategy, ensuring a fair comparison without considering checkpointing overhead. Their difference lies solely in how they manage global resource sharing in the presence of faults.
MSRP-FT and MSRP-FT-OF both employ the helping mechanism introduced in Section~\ref{sec:MSRP-FT}. Specifically, MSRP-FT-OF does not account for protocol overhead, while MSRP-FT includes the overhead as defined in Equation~\eqref{eq:MSRP-FT_overhead}. By contrast, Checkpointing requires tasks to simply re-execute sequentially when a fault occurs while holding a lock.

We implemented the evaluations by generating synthetic tests. We consider multicore platforms with a set of cores $M=[2,4,6,8,10,12,14,16]$. For each core, we vary the number of tasks per core in the range of $N= [2,3,4,5,6,7,8,9]$, which enables a proper range of schedulable systems for analysis and comparison.
The utilisation of each task $U_{i}$ is generated using the UUnifast algorithm \cite{bini2005measuring}, with a total system utilisation bound $U_{\text{total}}=0.04 \cdot M\cdot N$. Task periods are randomly selected between [1ms,1000ms] in a log-uniform distribution. The deadlines for the tasks are equal to their periods. 
For a task $\tau_i$, the total worst-case execution (including critical sections) is given by $\widehat{C}_{i} = U_{i} \times T_i$.

Each system contains $K$ number of shared resources equal to the number of cores (i.e., $K=M$).
Among the generated tasks, a ratio of tasks are selected to request a random number of resources (up to $K$). The ratio is denoted by resource-sharing factor $rsf\in[0,0.8]$. The maximum number of accesses to a resource from each task is set to $A=[1,5,10,15,20,25,30,35] $ for a single release. The worst-case computation time of a shared resource is generated randomly in the range $L\in[1\mu s,300\mu s]$. We denote the total length of critical sections of $\tau_i$ as $C^{r}_{i}$, then we enforce
$C_i=\widehat{C}_{i}-C^{r}_{i}\geq 0$ to obtain  $C_i$. 
We set the maximum number of faults a task can incur during a single release to be $f \in [0,7]$. For example, if $f=5$, a task in one release can incur randomly from 0 up to 5 times faults. 
This setup follows the same fault model used in \cite{chen2022msrp} to ensure a fair comparison. It is well-justified, as radiation and EMI studies confirm that a single energetic particle or disturbance can induce multi-bit or multi-node upsets, making multiple transient faults per task a realistic and critical scenario in safety-critical embedded systems \cite{baumann2005radiation}.
In addition, tasks are allocated by the Worst-Fit heuristic and are scheduled by FP-FPS. Deadline Monotonic Priority Ordering is applied.

To parametrize the structural coordination overheads introduced by MSRP-FT, as discussed in Limitation~\ref{limit:complex}, we rely on empirical measurements from real-time operating systems deployed on multicore platforms. The wrap construction overhead, denoted as $O_{\text{wrap}}$ and $O_{\text{self\_wrap}}$ in Equation~\eqref{eq:MSRP-FT_overhead}, arises from the cost of publishing a shared descriptor that contains function pointers, resource references, and synchronisation flags. 
The underlying operations include memory setup, global visibility enforcement, and memory fencing. Shi et al.~\cite{shi2017implementation} report that enabling the helping mechanism in LITMUS\textsuperscript{RT} incurs a wrap setup cost consistently below \(1\,\mu s\). Similarly, a Linux-based fault-tolerant messaging prototype reports cross-core message publication latency of approximately \(0.55\,\mu s\)~\cite{losa2017transparent}. Based on this evidence, we conservatively assign $O_{\text{wrap}} = O_{\text{self\_wrap}} = 1\,\mu s$ in our evaluation.

The per-replica execution overhead, \(O_{\text{replica}}\) captures the cost incurred before a helper begins execution: polling for readiness, fetching the operation descriptor, and copying resource state. Zhao et al.~\cite{zhao2020complete} measured this cooperative execution latency under MrsP on LITMUS\textsuperscript{RT} at approximately \(8.4\,\mu s\). Shi et al.~\cite{shi2017implementation} further decomposed this into \(5.6\,\mu s\) for the migration itself and \(1.5\,\mu s\) for context re-entry on the helper core. Even commercial RTOS platforms show comparable orders of magnitude, e.g., \(11\,\mu s\) for VxWorks and \(13.4\,\mu s\) for RTLinux~\cite{ip2001performance}. Based on these results, we assign \(O_{\text{replica}} = 6\,\mu s\), which reflects a realistic and typical cost of helper-driven coordination and execution in multicore embedded systems, so as to avoid overstating the overhead in comparison with LEFT-RS.

In the following evaluations, for each combination of system settings, 1000 systems are generated, and the percentage of schedulable systems of the considered models is presented. 
In Figure \ref{fig: schedulability}, the y-axis is the rate of schedulable systems over 1000 generated systems (denoted as schedulability).

\textbf{Observation 1:} \textit{As shown in Figure \ref{fig: schedulability}, LEFT-RS consistently outperforms the Checkpointing approach across all settings.}

Since the two approaches differ only in their fault-tolerance mechanisms for global resources, this performance gap validates that the direct integration of the conventional fault-tolerance technique will lead to unmanageable global resource contention. Although checkpointing reduces re-execution costs by dividing tasks into segments, it does not mitigate contention in fault-tolerant global resource access. When faults occur, prolonged lock-holding from sequential re-executions can significantly delay other tasks, leading to system inefficiency. 
% This suggests that minimizing the re-execution range alone is insufficient; accelerating fault recovery, especially for shared resources, is crucial for maintaining system schedulability. Without an effective fault-tolerance mechanism, frequent faults will continue to exacerbate contention, further degrading schedulability and overall performance.

\textbf{Observation 2:} \textit{LEFT-RS approach consistently outperforms MSRP-FT in Figures \ref{a3}, \ref{b3}, \ref{c3}, \ref{d3} and \ref{e3}.}

As shown in Figure \ref{a3}, increasing the number of cores from 2 to 16 while keeping the per-core task count fixed at 
$N=5$ reduces schedulability for all methods. This decline is mainly due to heightened contention for global resources as more cores and tasks compete for access. 

% Additionally, as the core count grows, efficient fault recovery management becomes increasingly crucial.

The performance gap between LEFT-RS and MSRP-FT widens as the number of cores increases. For instance, with 4 cores, LEFT-RS  successfully schedules 10 more systems than MSRP-FT, whereas with 6 cores, this difference grows to 71 systems. This suggests that as global resource contention intensifies, LEFT-RS more effectively manages both resource access and fault tolerance. Although MSRP-FT employs a fault-tolerance mechanism by utilising spinning tasks to execute critical sections in parallel and accelerate recovery, it still has some fundamental limitations. As discussed in Limitations \ref{limit:lock} and \ref{limit:complex} (Section~\ref{sec:limitation}), tasks in MSRP-FT must manage resource accesses sequentially. If a task encounters a fault, it delays all subsequent tasks in the FIFO queue. Moreover, the unique overheads associated with MSRP-FT can be directly affected by the number of requests from different cores as illustrated in Equation \eqref{eq:MSRP-FT_overhead}.
In contrast, LEFT-RS allows tasks to execute their critical sections independently without following FIFO order, which not only enables them to exit the FIFO queue upon successful update but also exempts them from the overheads associated with MSRP-FT.
Overall, as shown in Figure~\ref{a3}, LEFT-RS outperforms MSRP-FT by an average of 51.3\%.

A similar trend is observed in Figures~\ref{b3}, \ref{c3}, and \ref{d3}, where variations in system parameters, such as the total number of tasks, the proportion of tasks that share resources, and the length of critical sections, lead to increased global resource contention. In all these scenarios, LEFT-RS consistently outperforms MSRP-FT. Specifically, it achieves average improvements of 62.9\%, 58.8\%, and 84.5\% in Figures~\ref{b3}, \ref{c3}, and \ref{d3}, respectively.
In addition, when $L=[1\mu s,15\mu s]$ in Figure \ref{d3}, the Checkpointing outperforms MSRP-FT, indicating that when the critical section is small, re-executing sequentially is more 
cost-effective than applying a helping mechanism.

As shown in Figure \ref{e3}, an increase in the maximum number of resource accesses does not necessarily degrade schedulability. This is because, under system utilisation constraints, an increase in accesses can sometimes shorten the critical section length of shared resources generated. Similar effects have been observed in \cite{zhao2024frap,chen2022msrp}. Nevertheless, LEFT-RS continues to outperform MSRP-FT by an average of 53\%. 

\textbf{Observation 3:} \textit{LEFT-RS loses its advantage over MSRP-FT when the number of faults in the system is very small. }

As shown in Figure~\ref{f3}, when $f=0$, Checkpointing performs exactly like traditional MSRP by definition, as no faults are considered in the system.
LEFT-RS performs identically to Checkpointing because the synchronisation overhead is avoided, according to Rule~\ref{rule10} in Section~\ref{sec:protocol}.
The additional overheads associated with MSRP-FT cause it to perform worse than the other protocols.
When the maximum number of faults per task is randomly set to at most 1, where most resource accesses remain fault-free, LEFT-RS loses some of its advantages to MSRP-FT by an average of 6.2\%. This behaviour matches the prediction in Corollary~\ref{cor:better_performance}, where each remote request does not hold the resource-holder position for long due to rare faults, and the overhead of MSRP-FT is manageable compared to the critical section length in this scenario.

However, as $f$ increases, the benefits of LEFT-RS become evident. In scenarios where LEFT-RS has an advantage (apart from $f=1$), it outperforms MSRP-FT by an average of 110\%.  
Under MSRP-FT, resource access follows a one-serve-at-a-time semantics, where the head-of-queue task occupies the exclusive service slot. If this task suffers frequent faults during its critical section, it can thereby extends the blocking time of subsequent tasks in the FIFO queue. In contrast, LEFT-RS limits the maximum resource access time units of each request to $n^x_i + m + 1$, as illustrated in Corollaries~\ref{cor:proposed} and \ref{cor:better_performance}, regardless of the number of faults a request encounters. This prevents excessive faults from degrading system performance.

\begin{table}[t]
\centering
\vspace{4pt}
\footnotesize
\caption{Number of Systems Schedulable Only by MSRP-FT (left) or Only by LEFT-RS (right)}
\resizebox{\linewidth}{!}{%
\begin{tabular}{|c|c|c|c|c|c|c|}
\hline
\multirow{2}{*}{$A$} & MSRP-FT~\cmark & LEFT-RS~\cmark & & \multirow{2}{*}{$f$} & MSRP-FT~\cmark & LEFT-RS~\cmark \\
                     & LEFT-RS~\xmark & MSRP-FT~\xmark & &                     & LEFT-RS~\xmark & MSRP-FT~\xmark \\
\hline
1  & 0   & 111 & & 0 & 0   & 61  \\
5  & 0   & 127 & & 1 & 52  & 10  \\
10 & 0   & 124 & & 2 & 0   & 90  \\
15 & 0   & 115 & & 3 & 0   & 124 \\
20 & 0   & 116 & & 4 & 0   & 125 \\
25 & 1   & 121 & & 5 & 0   & 15  \\
30 & 0   & 126 & & 6 & 0   & 0  \\
35 & 0   & 139 & & 7 & 0   & 0   \\
\hline
\end{tabular}%
}
\label{tab:schedulable_systems}
\vspace{-1em}
\end{table}

Table~\ref{tab:schedulable_systems} compares the number of systems that can be scheduled exclusively by either MSRP-FT or LEFT-RS.
On the left half of the table, the parameter \( A \) is varied, corresponding to the setting in Figure~\ref{e3}. The first column under this section (labelled “MSRP-FT~\checkmark / LEFT-RS~\xmark”) shows the number of systems that are schedulable only by MSRP-FT but not by LEFT-RS. The second column (labelled “LEFT-RS~\checkmark / MSRP-FT~\xmark”) represents the opposite.
Similarly, the right half of the table varies the parameter \( f \), as shown in Figure~\ref{f3}. 
% Similarly, the first column under this section indicates systems schedulable only by MSRP-FT, and the second column lists those only schedulable by the Proposed method.
This table highlights the complementary strengths of the two methods under different parameter configurations.

\textbf{Observation 4:} \textit{LEFT-RS not only outperforms MSRP-FT in terms of overall schedulability, but also in scenarios where MSRP-FT struggles.}
% , demonstrating consistent and robust scheduling capabilities across diverse system configurations.}

When varying \( A \), as shown in Table~\ref{tab:schedulable_systems}, LEFT-RS consistently demonstrates a one-way dominance over MSRP-FT. Specifically, it can schedule almost all systems deemed schedulable under MSRP-FT, and additionally handle a large number of systems that cannot be scheduled under MSRP-FT. When \( A = 25 \), there is only one system that is uniquely schedulable by MSRP-FT. This aligns with Corollary~\ref{cor:better_performance}, which states that we do not theoretically dominate MSRP-FT in every scenario. In this case, the setting of \( f = 3 \) (the maximum number of faults) allows tasks generated to encounter between 0 to 3 faults. The rare advantage for MSRP-FT arises when tasks experience few faults, and the associated overhead remains manageable relative to critical section lengths.

A similar observation can be made from the right part of the table when varying the fault tolerance level \( f \). When \( f = 1 \),  MSRP-FT successfully schedules more systems that cannot be scheduled by LEFT-RS. This is expected, as many tasks in these configurations are fault-free or experience only one fault, which limits the fault-tolerance design of LEFT-RS.
Even then, only 52 out of 1000 systems are missed by LEFT-RS. When \( f \) increases to 2, this number immediately drops to 0. As \( f \) increases further, the dominance of the LEFT-RS becomes clear.
These findings further reinforce the method’s reliability and effectiveness under increasingly fault-tolerant scenarios.

\textbf{Observation 5:} \textit{LEFT-RS maintains a clear advantage over MSRP-FT-OF.}

The comparison with MSRP-FT-OF is not practically meaningful, since MSRP-FT is inherently associated with external overheads absent in LEFT-RS. Nevertheless, the results remain informative. As shown in Figure \ref{fig: schedulability}, the overall schedulability trends follow the same pattern as with MSRP-FT: the relative differences between bars are preserved, but the performance gap compared to LEFT-RS is smaller. Although MSRP-FT-OF remains inferior to LEFT-RS in the majority of cases, at lower fault numbers (see Figure \ref{f3}) one additional instance ($f=2$) appears where MSRP-FT-OF achieves slightly higher schedulability than LEFT-RS. This behavior is consistent with Observation 3. Finally, the presentation of MSRP-FT-OF confirms that the advantage of LEFT-RS stems from its protocol design rather than solely from the exclusion of overheads.

\section{Conclusion}
\label{sec:conclusion}

In this paper, we propose a Lock-frEe Fault-Tolerant Resource Sharing (LEFT-RS) protocol for multicore real-time systems. Unlike MSRP-FT, where concurrency only serves the head request and access remains serialized, LEFT-RS enables true concurrency by allowing each request to execute locally after concurrent reading.
Tasks can complete earlier upon successful execution, significantly reducing blocking in the presence of frequent transient faults.
The independent execution avoids heavy coordination overhead
and enables fault resilience.
A comprehensive worst-case response time analysis is developed to support timing predictability. Evaluation results demonstrate that our method consistently outperforms state-of-the-art protocols, achieving up to an $84.5\%$ improvement in schedulability on average. 
Future extensions will move beyond non-preemptive execution and adopt more advanced scheduling strategies, such as ceiling-based protocols.

\section{Acknowledgment}
This research was funded in part by Innovate UK SCHEME project (10065634). EPSRC Research Data Management: No new primary data was created during this study. 

\bibliographystyle{IEEEtran}
\bibliography{ref}

\end{document}